\documentclass[preprint,10pt,authoryear]{elsarticle}

\usepackage{amssymb}
\usepackage{amsthm}
\usepackage{natbib}
\usepackage{amsmath}
\usepackage{bm}
\usepackage{lineno}
\usepackage{hyperref}
\usepackage{caption}
\usepackage{subcaption}
\usepackage{multirow}
\usepackage{placeins}
\usepackage{chngcntr}
\usepackage{lineno}
\usepackage{float}
\usepackage{makecell}
\usepackage{rotating}

\newtheorem{prop}{Proposition}
\usepackage[a4paper, total={6.5in, 9in}]{geometry}

\setlength{\marginparwidth}{1.82cm}
\usepackage{todonotes}

\usepackage{xcolor}

\usepackage[inline,shortlabels]{enumitem}
\usepackage{ctable}

\DeclareMathAlphabet\mathbfcal{OMS}{cmsy}{b}{n}

\newcommand{\bx}{\boldsymbol{x}}
\newcommand{\bu}{\boldsymbol{u}}
\newcommand{\bv}{\boldsymbol{v}}
\newcommand{\bz}{\boldsymbol{z}}
\newcommand{\bX}{\boldsymbol{X}}
\newcommand{\bU}{\boldsymbol{U}}
\newcommand{\bV}{\boldsymbol{V}}
\newcommand{\bZ}{\boldsymbol{Z}}

\newcommand{\bw}{\boldsymbol{w}}

\newcommand{\bbeta}{\boldsymbol{\beta}}
\newcommand{\balpha}{\boldsymbol{\alpha}}
\newcommand{\bgamma}{\boldsymbol{\gamma}}
\newcommand{\blambda}{\boldsymbol{\lambda}}
\newcommand{\logit}{\mathrm{logit}}




\usepackage{amssymb}


\journal{arXiv}

\begin{document}

\begin{frontmatter}



\title{Modeling Bounded Count Environmental Data Using a Contaminated Beta-Binomial Regression Model}


\author[inst1]{Otto, Arnoldus F. \corref{cor1}}
\ead{arno.otto@up.ac.za}
\cortext[cor1]{Corresponding author}
\author[inst2]{Punzo, Antonio}
\author[inst1,inst3]{Ferreira, Johannes T.}
\author[inst1,inst4]{Bekker, Andri\"ette}
\author[inst2]{Tomarchio, Salvatore D.}
\author[inst5]{Tortora, Cristina}

\affiliation[inst1]{organization={Department of Statistics, University of Pretoria, Pretoria, South Africa}}
\affiliation[inst2]{organization={Department of Economics and Business, University of Catania, Catania, Italy}}
\affiliation[inst3]{organization={School of Statistics and Actuarial Science, University of the Witwatersrand, Johannesburg, South Africa}}
\affiliation[inst4]{organization={Centre for Environmental Studies, Department of Geography, Geoinformatics and Meteorology, University of Pretoria, Pretoria, South Africa}}
\affiliation[inst5]{organization={Department of Mathematics and Statistics, San José State University, California, United States of America}}

\begin{abstract}
Climate change is a crucial aspect of environmental challenges, with profound implications for human well-being, affecting vital ecosystem services such as clean water, food production, and pollination. 
It also causes species displacement, habitat loss, and increased extinction risks. 
This paper investigates two environmental applications related to climate change, where observations consist of bounded counts.
The first application examines the effect of winter malnutrition on mule deer (\textit{Odocoileus hemionus}) fawn mortality. 
Animals are crucial in ecosystem services, including seed dissemination, pollination, and pest control. 
Since climate change affects species' survival, it is essential to study these impacts further.
The second application investigates the public perception of climate change, which is vital for shaping effective policies and environmental strategies. The Eurobarometer 95.1 survey, conducted in March–April 2021, included a key question assessing the perceived severity of climate change on a scale from 1 to 10. Analyzing responses to this question is crucial for gaining deeper insights into public sentiment regarding climate change.
The binomial and beta-binomial (BB) models are commonly used for bounded count data, with the BB model offering the advantage of accounting for potential overdispersion. However, extreme observations in real-world applications may hinder the performance of the BB model and lead to misleading inferences.
To address this issue, we propose the contaminated beta-binomial (cBB) distribution (cBB-D), which provides the necessary flexibility to accommodate extreme observations. The cBB model accounts for overdispersion and extreme values while maintaining the mean and variance properties of the BB distribution.
The availability of covariates that improve inference about the mean of the bounded count variable motivates the further proposal of the cBB regression model (cBB-RM). 
Different versions of the cBB-RM model — where none, some, or all of the cBB parameters are regressed on available covariates — are fitted to the datasets. 
The effectiveness of our model is also demonstrated through a sensitivity analysis to assess the impact of extreme values on parameter estimation.
\end{abstract}

\begin{keyword}
beta-binomial \sep climate \sep contaminated \sep count data \sep kurtosis \sep overdispersion  \sep regression 
\end{keyword}

\end{frontmatter}

\section{Introduction} 
\label{sec:Intro}

In environmental research, data often consists of bounded counts \citep{paul2007generalized,ryan2007application}. This is particularly noteworthy in climate change studies, which represent one of our most pressing global challenges, with far-reaching implications for ecosystems \citep{yee2008comparing}, and economies \citep{sciandra2024discrete}.

This paper presents two environmental applications related to climate change, with the first focusing on the impact of winter malnutrition on mule deer (\textit{Odocoileus hemionus}) fawn mortality. Climate-driven shifts in environmental conditions can influence species survival and ecosystem dynamics, making the study of such impacts increasingly important \citep{aspinall1994climate,muluneh2021impact}. Mule deer inhabit diverse environments, and their populations vary in response to numerous factors. Most noticeable changes are those that occur immediately and dramatically, such as a decline following a harsh winter. Populations may also decline or increase for extended periods and broader geographic range. This data was gathered from 1875 radio-collared fawns captured in early winter across Colorado, Idaho, and Montana in the United States of America from 1981 to 1996 \citep{unsworth1999mule}. This dataset reflects an extensive investigation into the overwinter survival of a species that is particularly responsive to environmental factors due to climate change.

The second application is understanding the public perception of climate change, which is crucial for policymakers and environmental organizations striving to implement effective climate strategies. Public attitudes towards climate change can significantly influence the adoption of sustainable practices and the political will to enact necessary regulations \citep{arikan2021public}.
Eurobarometer is a series of public opinion surveys conducted regularly on behalf of the European Commission and other European Union (EU) institutions since 1973. 
These surveys address various topical issues relating to the EU throughout its member states, including their opinion on climate-related topics. 
In the Eurobarometer 95.1 survey, conducted between March and April 2021, one key question posed was:
\begin{quote}
\emph{``How serious a problem do you think climate change is at this moment? Please use a scale from 1 to 10, with `1' meaning it is `not at all a serious problem' and `10' meaning it is `an extremely serious problem.'"}
\end{quote}
This question provides a quantitative measure of the perceived severity of climate change, reflecting individual concerns and priorities regarding this global issue. The ordinal nature of the response scale, ranging from 1 to 10, allows for a nuanced analysis of the degree to which climate change is perceived as a threat. The availability of personal information allows us to analyze the data about various demographic and socio-economic factors further. 
Surveys such as these include many important variables expressed as bounded counts.

Let $Y$ be the bounded count variable of interest, taking values in $\left\{0,1,\ldots,m\right\}$, where $m\geq 1$ is known in advance.
From a statistical perspective, $Y$ is often regarded as the count of successes out of $m$ trials. 
For the following discussion, it may be helpful to consider both perspectives on this variable.
Handling sample data regarding $Y$ presents a nuanced challenge.
While reducing such data into proportions may be tempting, it is generally preferable to model the raw counts rather than convert them to a proportion before modelling. 
This approach allows for the weighting of larger trials more than smaller trials rather than treating all proportions equally. 
For example, a proportion of $0.5$ could represent various scenarios, such as a single success out of two attempts or four successes out of four attempts. 
Solely focusing on proportions obscures the intricate context underlying the data and imposes limitations on capturing the true variability inherent in count-based observations \citep{martin2020modeling}.

The one-parameter binomial distribution (B-D) and its extension, the binomial regression model (B-RM), have traditionally been the standard choices as they directly model the raw bounded counts.  
The B-D is defined by the following probability mass function (PMF):
\begin{equation}
    f_{\text{B}_m}(y;\pi)={m\choose y}\pi^y(1-\pi)^{m-y}, \quad y=0,1,\dots,m,
    \label{eq:Binom PMF}
\end{equation}
where $\pi \in (0,1)$ represents the probability of success in each trial. The notation reflects the fact that in the considered context, $\pi$ is the only unknown parameter, whereas $m$ is an inherent characteristic of the phenomenon under study.
If $Y$ has the PMF in \eqref{eq:Binom PMF}, we denote it as $Y\sim \mathcal{B}_m\left(\pi\right)$. 
The moments and characteristics of practical interest of $Y\sim \mathcal{B}_m\left(\pi\right)$, namely mean, variance, skewness, and excess kurtosis, are:  
\begin{equation*}
    \mathrm{E}_{\text{B}_m}(Y;\pi)=m\pi,
\end{equation*}
\begin{equation}
    \mathrm{Var}_{\text{B}_m}(Y;\pi)=m\pi(1-\pi),\label{eq:B var}
\end{equation}
\begin{equation}
    \mathrm{Skew}_{\text{B}_m}(Y;\pi)=\frac{1-2\pi}{\sqrt{m\pi(1-\pi)}}, \label{eq:B skew}
\end{equation}
and  
\begin{equation}
    \mathrm{ExKurt}_{\text{B}_m}(Y;\pi)=\frac{1-6\pi(1-\pi)}{m\pi(1-\pi)} \label{eq:ExKurt B}.
\end{equation}  
All these characteristics are governed by the single parameter $\pi$.
The variance in \eqref{eq:B var} lies in $(0,m/4]$, attaining its maximum at $\pi = 0.5$.
The skewness in \eqref{eq:B skew} depends on $\pi$, being positive for $\pi < 0.5$, negative for $\pi > 0.5$, and zero when $\pi = 0.5$; it is always bounded within the interval $\left(-1/\sqrt{m\pi(1-\pi)}, 1/\sqrt{m\pi(1-\pi)}\right)$. 
As $m$ increases, the skewness decreases in magnitude, approaching zero. 
The excess kurtosis in \eqref{eq:ExKurt B} reaches its minimum at $\pi = 0.5$, taking value $-2/m$, and diverges to infinity as $\pi$ approaches 0 or 1.
Consequently, as $m$ increases, the range of possible negative excess kurtosis values narrows. 
Although the characteristics may vary, their dependence solely on $\pi$ restricts the binomial model's ability to capture the empirical behavior of bounded counts accurately. 
In particular, it is well known that the binomial model is inadequate in the presence of overdispersion, where the observed variance exceeds the binomial variance for a given mean. 

The beta-binomial (BB) distribution (BB-D) has emerged as a natural extension of the B-D to address the latter limitation, and plays the same role for the B-D as the negative binomial distribution does for the Poisson distribution.  
Its flexibility makes it a powerful tool in various fields, including epidemiology \citep{arostegui2010assessment} and microbiology \citep{martin2020modeling}, 
where the underlying success probabilities (i.e., $\pi$) may vary or be uncertain. 

The mean-parameterized or, more appropriately, the $\pi$-parametrized BB-D has the following PMF:
\begin{equation} 
\label{pmf mean beta binomial distribution}
   f_{\text{BB}_m}(y;\pi,\sigma) = {m\choose y}\frac{\displaystyle\mathrm{B}\left(y+\frac{\pi}{\sigma},m-y+\frac{1-\pi}{\sigma}\right)}{\displaystyle\mathrm{B}\left(\frac{\pi}{\sigma},\frac{1-\pi}{\sigma}\right)}, \quad y=0,1,\dots,m, 
\end{equation}
where $\pi\in(0,1)$ is the analogue of the binomial probability of success parameter, $\sigma>0$ is the dispersion parameter, and $\mathrm{B}(\cdot,\cdot)$ denotes the beta function. 
If $Y$ has the PMF in \eqref{pmf mean beta binomial distribution}, then we simply write $Y\sim\mathcal{BB}_m(\pi,\sigma)$. 
The expected value and variance of $Y\sim\mathcal{BB}_m(\pi,\sigma)$ are
\begin{equation}
    \mathrm{E}_{\text{BB}_m}(Y;\pi)=m\pi, \label{ec:BB mean}
\end{equation}
as for the B-D, and
\begin{align}
\label{eq variance betabinomial}
\mathrm{Var}_{\text{BB}_m}(Y;\pi,\sigma)&=m\pi(1-\pi)\left[1+(m-1)\frac{\sigma}{1+\sigma}\right]\nonumber\\
&=\mathrm{Var}_{\text{B}_m}(Y;\pi)\left[1+(m-1)\frac{\sigma}{1+\sigma}\right].
\end{align}
Consequently, the mean in \eqref{ec:BB mean} is solely determined by $\pi$, whereas the variance in \eqref{eq variance betabinomial} is influenced by both $\pi$ and $\sigma$.
The parameterization of the PMF in \eqref{pmf mean beta binomial distribution} in terms of $\pi$ and $\sigma$ offers enhanced statistical interpretability and, because of \eqref{ec:BB mean}, allows the use of the BB-D in a regression (toward the mean) context. 
Since the term $\frac{\sigma}{1+\sigma}$ in \eqref{eq variance betabinomial} lies within $(0,1)$, the term on squared brackets on the right-hand side of \eqref{eq variance betabinomial} can assume values in $(1,m)$ and, as such, it acts as an inflation factor that allows the BB-variance in \eqref{eq variance betabinomial} to span from $\mathrm{Var}_{\text{B}_m}(Y;\pi)$ to $m\mathrm{Var}_{\text{B}_m}(Y;\pi)$, thereby accommodating varying degrees of binomial overdispersion. 
This is why one can interpret $\sigma$ as a measure of the overdispersion.
Another aspect which is important to emphasize is that, as $\sigma$ increases in $\mathcal{BB}_m(\pi,\sigma)$, the variance in \eqref{eq variance betabinomial} also increases, making observations at the extremes of the support $\left\{0,1,\ldots,m\right\}$ -- which we will frequently refer to as extreme observations in this paper -- more likely compared to the binomial distribution $\mathcal{B}_m(\pi)$.
As a limiting case, in the scenario of maximum variance ($\sigma\to\infty$), the BB-D tends to a two-point distribution assuming values $0$ and $m$ with probabilities $1-\pi$ and $\pi$, respectively.
Some researchers favor the reparameterization in terms of $\pi$ and $\gamma=\frac{\sigma}{1+\sigma}$; see, e.g., \citet{bayes2024robust}.  
However, since $\sigma$ and $\gamma$ share a one-to-one correspondence, they convey equivalent information about the BB-D.
Finally, skewness and excess kurtosis for $Y\sim\mathcal{BB}_m(\pi,\sigma)$ are given by 
\begin{align}\label{eq BB skew}
\text{Skew}_{\text{BB}_m}(Y;\pi,\sigma) &=\frac{(1-2 \pi) (2 m \sigma +1) }{(2 \sigma +1)\sqrt{\frac{ m \pi (1-\pi)  (m \sigma +1)}{\sigma +1}}}\notag\\
    &= \frac{(1-2 \pi) (2 m \sigma +1)}{(2 \sigma +1)\sqrt{\text{Var}_{\text{BB}_m}(Y;\pi,\sigma)}},
\end{align}
and
\begin{align}\label{eq BB kurt}
\text{ExKurt}_{\text{BB}_m}(Y;\pi,\sigma)=\frac{6\pi(1-\pi)(m(6\sigma+5)\sigma(m\sigma+1)+\sigma+1)-(\sigma+1)(\sigma(6m(m\sigma+1)-1)+1)}{\pi(1-\pi) m(2\sigma+1)(3\sigma+1)(m\sigma+1)}.
\end{align}
The skewness and excess kurtosis of the $\mathcal{BB}_m(\pi,\sigma)$ are governed by both $\pi$ and $\sigma$.  
The skewness in \eqref{eq BB skew} is always bounded within the interval 
$$
\left(-\frac{(2 m \sigma +1) }{(2 \sigma +1)\sqrt{\frac{(1-\pi) m \pi   (m \sigma +1)}{\sigma +1}}},\frac{(2 m \sigma +1) }{(2 \sigma +1)\sqrt{\frac{(1-\pi) m \pi   (m \sigma +1)}{\sigma +1}}}\right).
$$ 
As for the B-D case, it is zero when $\pi=0.5$, positive for $\pi<0.5$, and negative for $\pi>0.5$. Increasing $\sigma$ leads to an increase in the magnitude of the skewness, while an increase in $m$ decreases it. The excess kurtosis in \eqref{eq BB kurt} reaches its minimum $-\frac{2(3m\sigma(m\sigma+1)+\sigma+1)}{m(1+3\sigma)(1+m\sigma)}$ at $\pi = 0.5$, and diverges to infinity as $\pi$ approaches $0^+$ or $1^-$.
The excess kurtosis also increases as $\sigma$ increases, except between the range
\begin{align*}
   \pi \in &\left(\frac{1}{2}-\frac{\sqrt{3}}{6} \sqrt{\frac{\left(2 \sigma _1+1\right) \left(2 \sigma _2+1\right) \left(3 m \left(\sigma _2 \left(m \sigma _1+\sigma _1+1\right)+\sigma _1\right)+2\right)}{6 m \sigma _2^2 \left(m \sigma _1+\sigma _1+1\right)+\sigma _2 \left(m \sigma _1+\sigma _1+1\right) \left(m \left(6 \sigma _1+5\right)+6\right)+\sigma _1 \left(m \left(6 \sigma _1+5\right)+6\right)+4}},\right.\\
    &\left.\frac{1}{2}+\frac{\sqrt{3}}{6} \sqrt{\frac{\left(2 \sigma _1+1\right) \left(2 \sigma _2+1\right) \left(3 m \left(\sigma _2 \left(m \sigma _1+\sigma _1+1\right)+\sigma _1\right)+2\right)}{6 m \sigma _2^2 \left(m \sigma _1+\sigma _1+1\right)+\sigma _2 \left(m \sigma _1+\sigma _1+1\right) \left(m \left(6 \sigma _1+5\right)+6\right)+\sigma _1 \left(m \left(6 \sigma _1+5\right)+6\right)+4}}
    \right),
\end{align*}
where the excess kurtosis for $\sigma_2>\sigma_1>0$ is smaller for $\sigma_2$ than for $\sigma_1$. 
For example, if $m=10,\sigma_1=0.1$ and $\sigma_2=1$, the excess kurtosis for $\sigma_2$ is larger than that for $\sigma_1$ when $\pi \in (0,0.223)$ and $\pi \in (0.777,1)$, smaller when $\pi \in(0.223,0.777)$, and equal when $\pi =0.223$ or $\pi=0.777$.


As noted by \citet{bayes2024robust}, despite the ability of the BB-D to capture binomial overdispersion, it is limited by its two parameters, $\pi$ and $\sigma$, which do not offer sufficient flexibility to fully capture higher-order characteristics such as skewness and excess kurtosis (refer to \eqref{eq BB skew} and \eqref{eq BB kurt}). 
The latter is particularly important as it may indicate an excess of extreme observations relative to the fitted BB-D. 
Consequently, the BB-D may struggle to model empirical skewness and kurtosis accurately, as it cannot fully adjust the distribution's shape beyond the mean and variance. 
From an inferential perspective, fitting the BB-D in the presence of extreme observations can lead to underestimation of standard errors and overstatement of the significance of estimates, resulting in potentially misleading inferences.

Motivated by these considerations, we introduce the contaminated BB (cBB) distribution (cBB-D). 
This model is formulated as a two-component mixture of BB distributions, where one component represents the typical (or regular) observations (the reference BB-D) and the other, with the same mean but an increased dispersion accounts for the excess of extreme observations (the contaminant distribution). 
The underlying approach mirrors that of other studies that utilize contaminated distributions to address varying data supports; see, for example, \citet{Punz:Anew:2019}, \citet{Punz:Bagn:PhyA:2021,punzo2024asymmetric}, \citet{otto2025contaminated}, and \citet{tomarchio2024new}. 
For a detailed discussion of the reference distribution concept, which we assume to be the BB-D in this case, refer to \citet{davies1993identification}.

The proposed cBB-D offers a flexible method for fitting a regression model to bounded count outcomes, particularly when overdispersion and an excess of extreme observations are present. 
Additionally, our model has the advantage of a closed-form PMF and interpretable parameters. 
As highlighted by \citet{ley2021flexible}, \citet{otto2024refreshing}, and \citet{wagener2024uncovering}, it is crucial for parameters to have meaningful interpretations to draw valid inferences about the underlying population from which the data originate. 
In detail, in addition to the standard BB-D parameters, the cBB-D introduces two new parameters: one representing the proportion of observations from the contaminant BB-D and another indicating the degree of contamination. 
These extra parameters enable the cBB-D to capture empirical skewness and excess kurtosis better, and, by extension, address the excess of extreme observations compared to the reference BB-D. 
Furthermore, a cBB regression model (cBB-RM) is developed by incorporating the cBB-D into a regression framework, allowing for the inclusion of available and relevant covariates. 
Unlike traditional regression models, we do not limit the regression to the parameter $\pi$ only, but, considering convenient link functions, we extend the regression to all the parameters of the cBB-D and, to further increase the flexibility of the method, we also allow for different covariates on each parameter.

The paper is set out as follows: 
The proposed cBB-D and cBB-RM are presented in Section \ref{Section Methodology}. 
An expectation-maximization (EM) algorithm for maximum likelihood estimation is outlined in Section \ref{Section maximum likelihood estimation}, along with a discussion on strategies for initializing the parameters and convergence criteria considered. A simulation study in Section \ref{Section simulation} illustrates its parameter recovery ability, followed by a sensitivity analysis to investigate the impact of extreme observation on the estimations. In Section  \ref{Section application}, we focus on two distinct environmental datasets, namely the mule deer survival dataset and the Eurobarometer survey data, to illustrate the viability of the cBB-D and cBB-RM as alternative models for overdispersed bounded count data. Finally, conclusions are drawn in Section \ref{Section conclusion}.

\section{Methodological proposals}
\label{Section Methodology}

In this section, we introduce the cBB-D (Section~\ref{Section contaminated beta binomial distribution}) and the cBB-RM (Section~\ref{Section Contaminated beta binomial regression model}).

\subsection{The contaminated beta-binomial distribution} \label{Section contaminated beta binomial distribution}

The PMF of the proposed cBB-D is
\begin{align} \label{pdf contaminated beta binomial}
    f_{\text{cBB}_m}(y;\pi,\sigma,\delta,\eta) = (1-\delta)\underbrace{f_{\text{BB}_m}(y;\pi,\sigma)}_{\text{reference }}+\delta\underbrace{ f_{\text{BB}_m}(y;\pi,\eta\sigma)}_{\text{contaminant}},
\end{align}
where $\pi\in(0,1)$, $\sigma>0$, $\delta\in(0,1)$, and $\eta > 1$. 
If $Y$ has the PMF given in \eqref{pdf contaminated beta binomial}, we will simply write $Y\sim c\mathcal{BB}_m(\pi,\sigma,\delta,\eta)$. 
Due to the bimodal nature of the BB-D -- which can be bell-shaped, U-shaped, J-shaped, or reverse-J-shaped depending on its parameter values -- the cBB-D may also be bimodal, or even trimodal. This flexibility allows the cBB-D to model W-shaped data, where data is clustered at both tails like the U-shape of the BB-D, while retaining an additional central mode (\citealp{gallop2013model}, \citealp{keller2022w}, and \citealp{korkmaz2020new}). 
The bimodal nature of the BB-D is illustrated in \figurename~\ref{plot BB} for varying values of $\sigma$, while examples of the cBB-D are illustrated in \figurename~\ref{plot cBB} for different choices of $\delta$ and $\eta$.
\begin{figure}[h!]
	\centering
	\begin{subfigure}[h]{0.48\textwidth}
		\centering
  \includegraphics[scale=0.48]{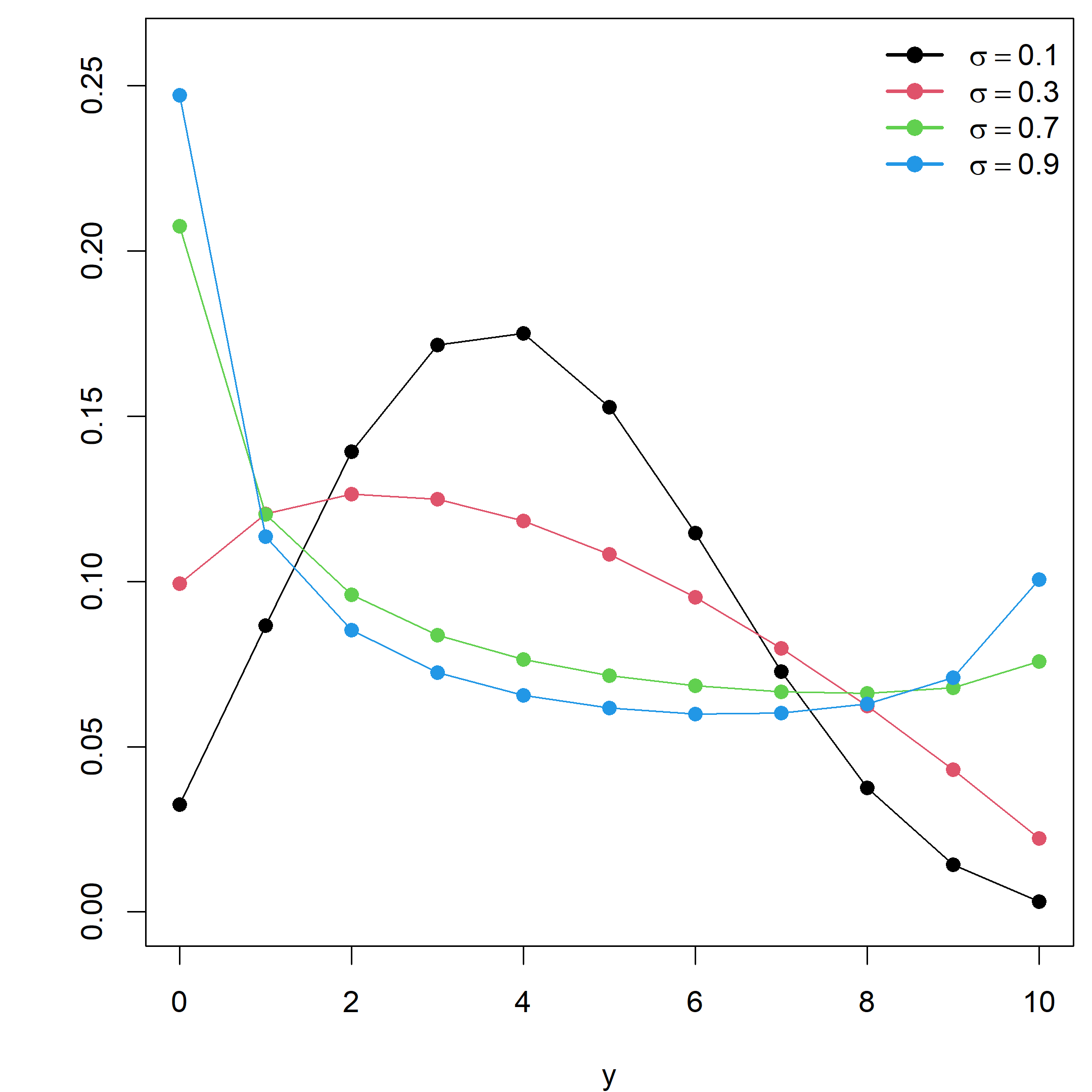}
		\caption{$\pi=0.4$  }
	\end{subfigure}
 \begin{subfigure}[h]{0.48\textwidth}
		\centering
		\includegraphics[scale=0.48]{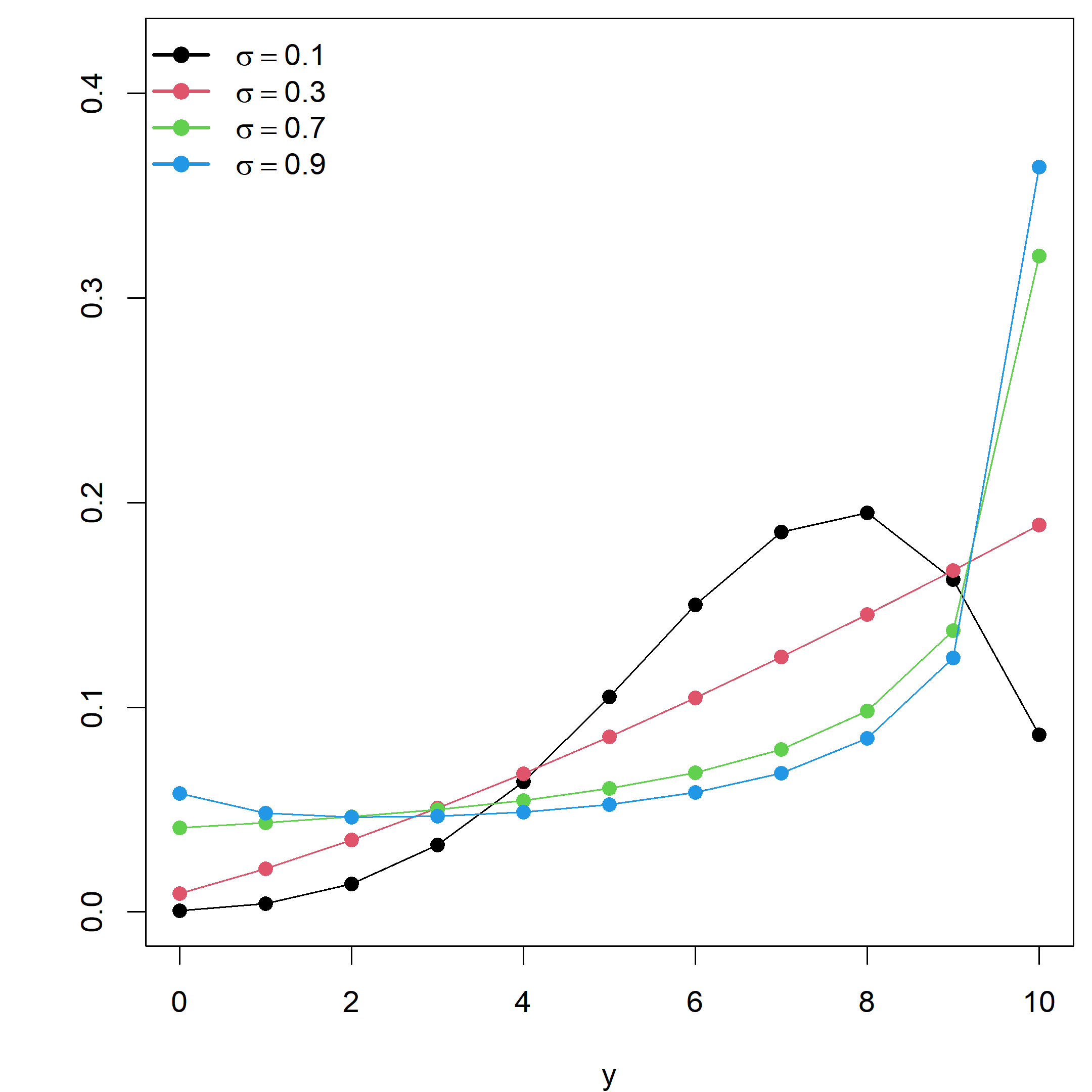}
		\caption{$\pi=0.7$ }
	\end{subfigure}
	\caption{Plots of the BB-D \eqref{pmf mean beta binomial distribution} for different values of $\sigma$ when $m=10$.}
\label{plot BB}
\end{figure}

\begin{figure}[h!]
	\centering
	\begin{subfigure}[h]{0.48\textwidth}
		\centering
  \includegraphics[scale=0.48]{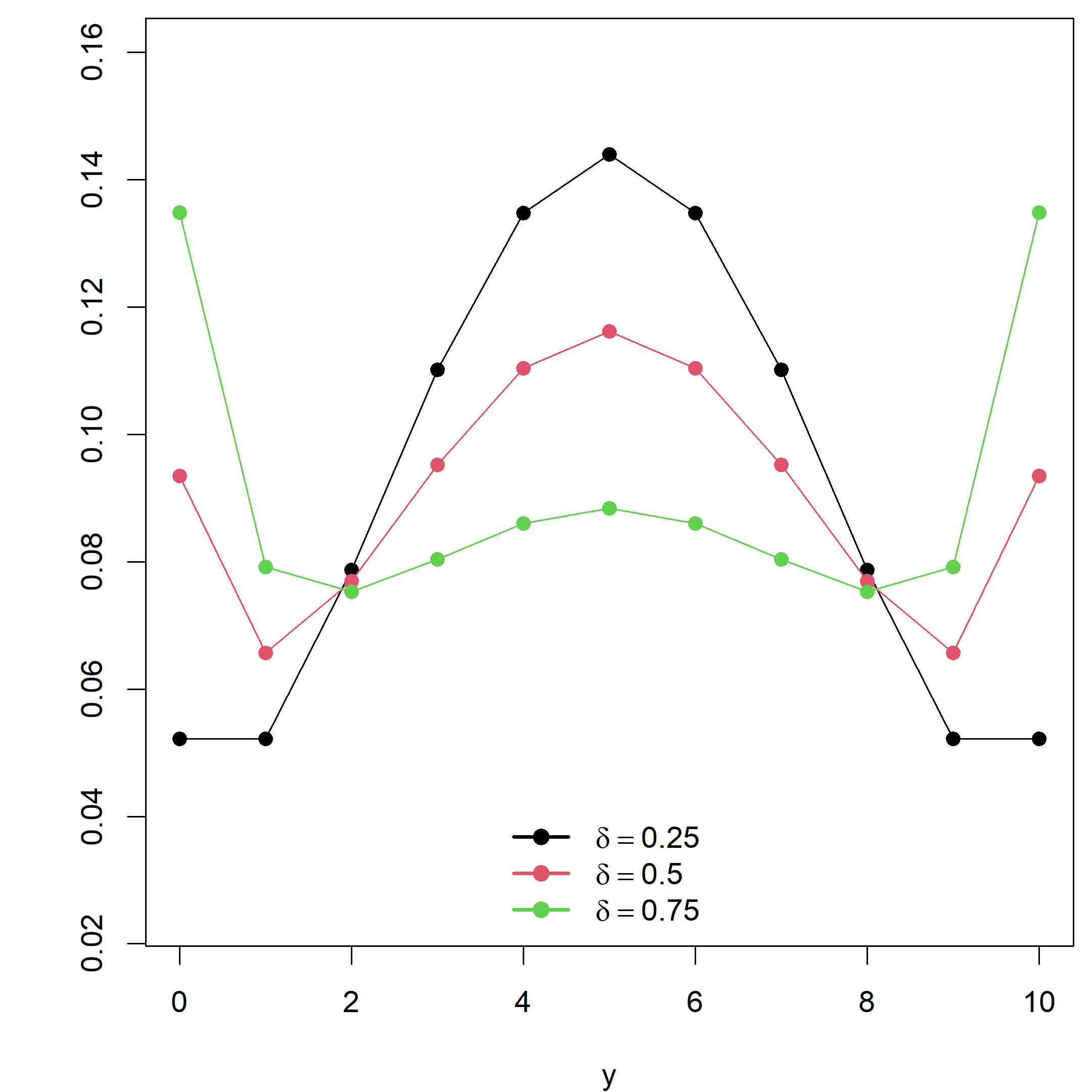}
		\caption{Different values of $\delta$.}
	\end{subfigure}
 \begin{subfigure}[h]{0.48\textwidth}
		\centering
		\includegraphics[scale=0.48]{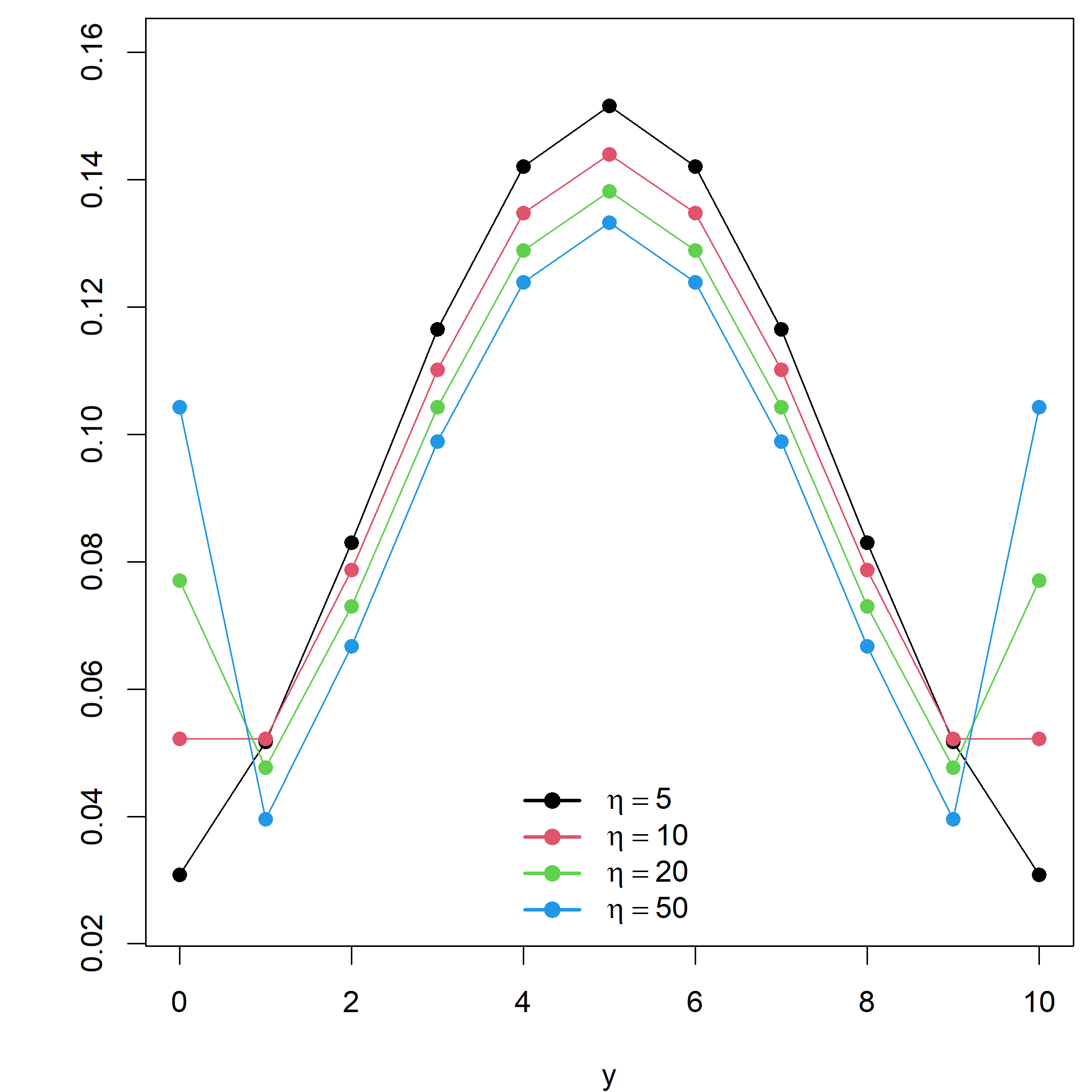}
		\caption{Different values of $\eta$. }
	\end{subfigure}
	\caption{Plots of the cBB-D \eqref{pdf contaminated beta binomial} with $m=10$, $\pi =0.5$, and $ \sigma=0.1$ for different values of $\delta$ (when $\eta=10$) and $\eta$ (when $\delta=0.25$).}
\label{plot cBB}
\end{figure}

The moments, or shape characteristics, of practical interest of $Y\sim c\mathcal{BB}_m(\pi,\sigma,\delta,\eta)$ are derived in \ref{ProofinApp} and are:
\begin{equation}
    \mathrm{E}_{\text{cBB}_m}(Y;\pi)=m\pi, \label{ec:cBB mean}
\end{equation}
\begin{align}\label{eq cBB var}
        \text{Var}_{\text{cBB}_m}(Y;\pi,\sigma,\delta,\eta) 
        & = (1-\delta)\text{Var}_{\text{BB}_m}(Y;\pi,\sigma)+\delta\text{Var}_{\text{BB}_m}(Y;\pi,\eta\sigma)\nonumber\\
        &= \frac{m\pi(1-\pi)\left[(1-\delta)(1+m\sigma)(1+\eta\sigma)+\delta(1+m\eta\sigma)(1+\sigma)\right]}{(1+\sigma)(1+\eta\sigma)},
    \end{align}
    \begin{align}\label{eq cBB skew}
    \text{Skew}_{\text{cBB}_m}(Y;\pi,\sigma,\delta,\eta)=&\frac{1-\delta}{\left(\text{Var}_{\text{cBB}_m}(Y;\pi,\sigma,\delta,\eta)\right)^\frac{3}{2}}\left(\frac{m\pi(1-\pi)(1-2\pi)(1+m\sigma )(1+2m\sigma )}{(1+\sigma )(1+2\sigma )}\right)\notag\\
          &+
          \frac{\delta}{\left(\text{Var}_{\text{cBB}_m}(Y;\pi,\sigma,\delta,\eta)\right)^\frac{3}{2}}\left(\frac{m\pi(1-\pi)(1-2\pi)(1+m\sigma \eta)(1+2m\sigma \eta)}{(1+\sigma \eta)(1+2\sigma \eta)}\right),
          \end{align}
 \begin{align}\label{eq cBB kurt}
    &\text{ExKurt}_{\text{cBB}_m}(Y;\pi,\sigma,\delta,\eta)\notag\\
    &=-3+\frac{m \pi   (1-\pi ) }{\left[{\text{Var}_{\text{cBB}}(Y;\pi,\sigma,\delta,\eta)}\right]^2}\notag\\
    &\times\left(\frac{(1-\delta)(m \sigma +1) \left(6 (3 (\pi -1) \pi +1) m^2 \sigma ^2+3 m \sigma  (2-(\pi -1) \pi  (m-6))-3 (\pi -1) \pi  (m-2)-\sigma +1\right)}{(\sigma +1) (2 \sigma +1) (3 \sigma +1)}\right.\notag\\
    &+\left.\frac{\delta(\eta  m \sigma +1) \left(-\eta  \sigma +6 \eta ^2 (3 (\pi -1) \pi +1) m^2 \sigma ^2+3 \eta  m \sigma  (2-(\pi -1) \pi  (m-6))-3 (\pi -1) \pi  (m-2)+1\right)}{(\eta  \sigma +1) (2 \eta  \sigma +1) (3 \eta  \sigma +1)}\right).
\end{align}
The variance ranges between $\left(0,\frac{m\left[(1-\delta)(1+m\sigma)(1+\eta\sigma)+\delta(1+m\eta\sigma)(1+\sigma)\right]}{4(1+\sigma)(1+\eta\sigma)}\right]$,  reaching a maximum at $\pi=0.5$ and tends to 0 as $\pi$ approach $0^+$ or $1^-$. The variance increases with $\sigma$ and, since $\delta \in (0,1)$ and $\eta>1$, the variance of the cBB-D is always greater than that of the BB-D. Moreover, the variance increases as $\delta$ and $\eta$ increase, as illustrated in \figurename~\ref{plot variance cBB} for different values of $\delta$ and $\eta$. Similarly, the skewness of the cBB-D is consistently larger in magnitude than the BB-D counterparts, as illustrated in \figurename~\ref{plot skewness cBB}. 
The kurtosis is at a minimum for $\pi=0.5$, with the effect of $\delta$ and $\eta$ illustrated in \figurename~\ref{plot kurt cBB}. We observe that increasing either $\delta$ or $\eta$ leads to a rise in excess kurtosis for some values of $\pi$, while for other values, it causes a decrease - mirroring the effect of increasing $\sigma$ in the BB-D case.
Since the cBB-D moments depend on four parameters, rather than just $\pi$ and $\sigma$ as in the BB-D, the inclusion of $\delta$ and $\eta$ enhances the model's flexibility. 
This added flexibility allows the cBB-D to better accommodate for possible extreme observations and effectively address overdispersion observed in the BB-D. Moreover, the moments of the BB-D emerge as limiting cases of the cBB-D, reinforcing its ability to generalize and extend the BB-D.
 
\begin{figure}[!h]
	\centering
	\begin{subfigure}[h]{0.48\textwidth}
		\centering
  \includegraphics[scale=0.48]{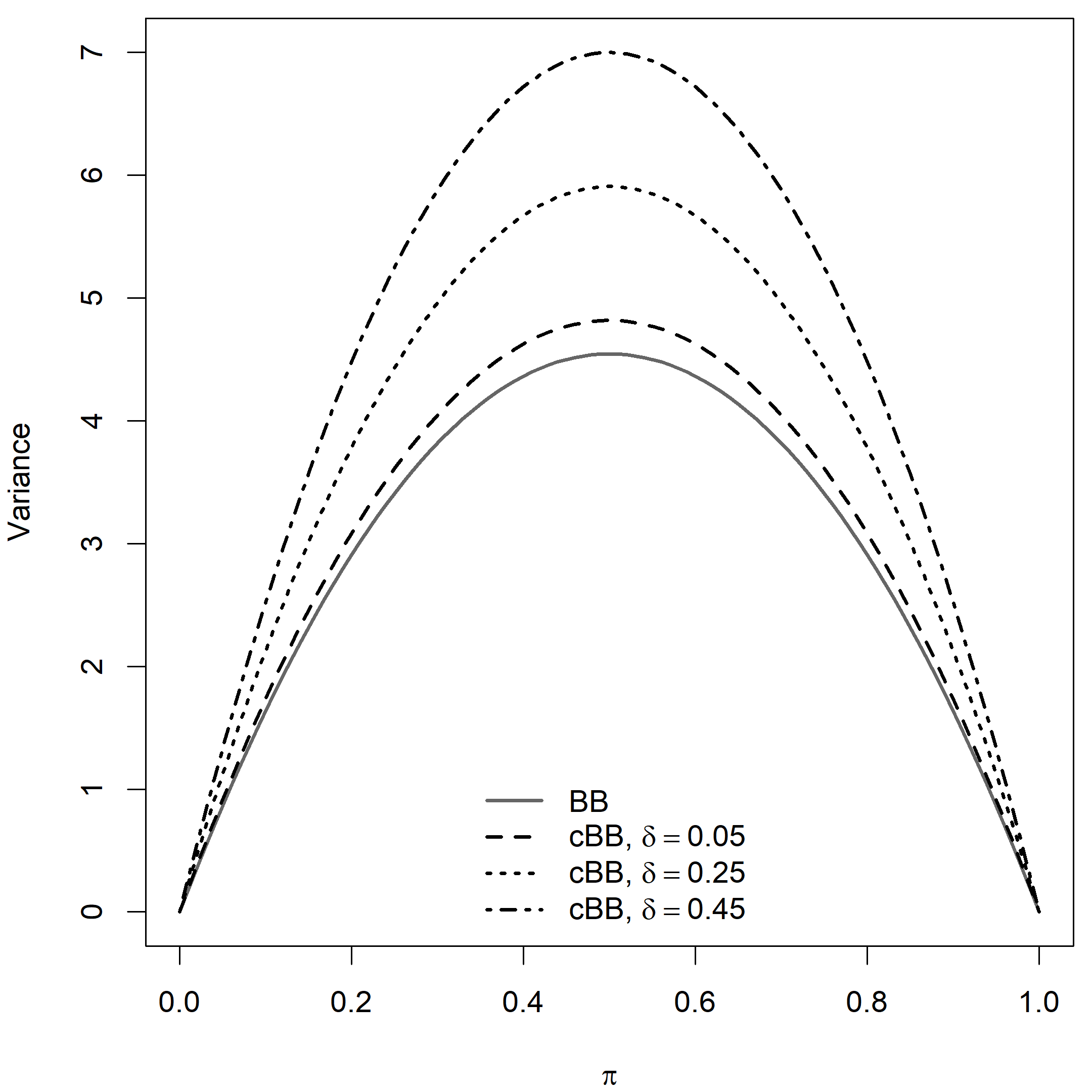}
		\caption{Different values of $\delta$.}
	\end{subfigure}
 \begin{subfigure}[h]{0.48\textwidth}
		\centering
		\includegraphics[scale=0.48]{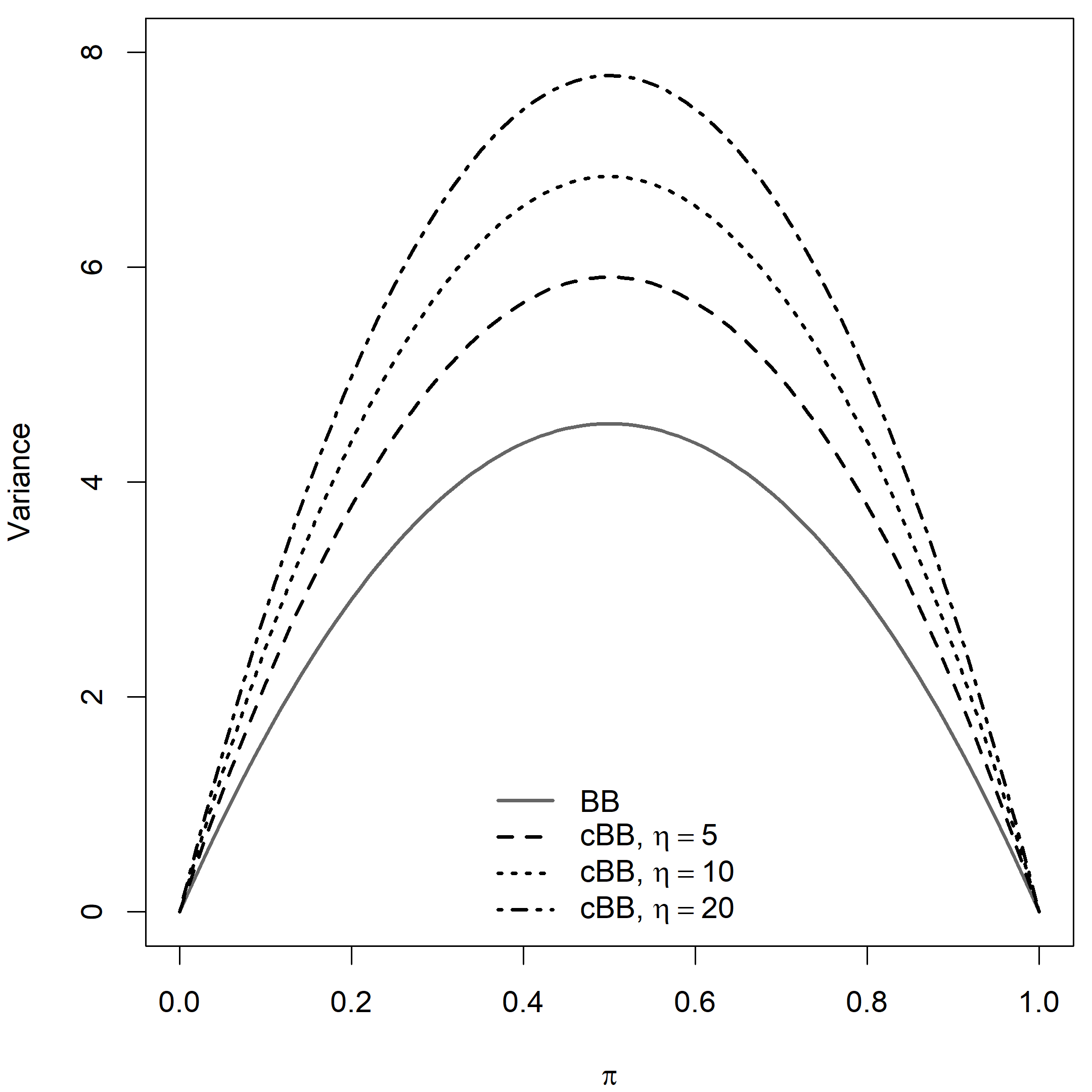}
		\caption{Different values of $\eta$. }
	\end{subfigure}
 	\begin{subfigure}[h]{0.48\textwidth}
		\centering
  \includegraphics[scale=0.48]{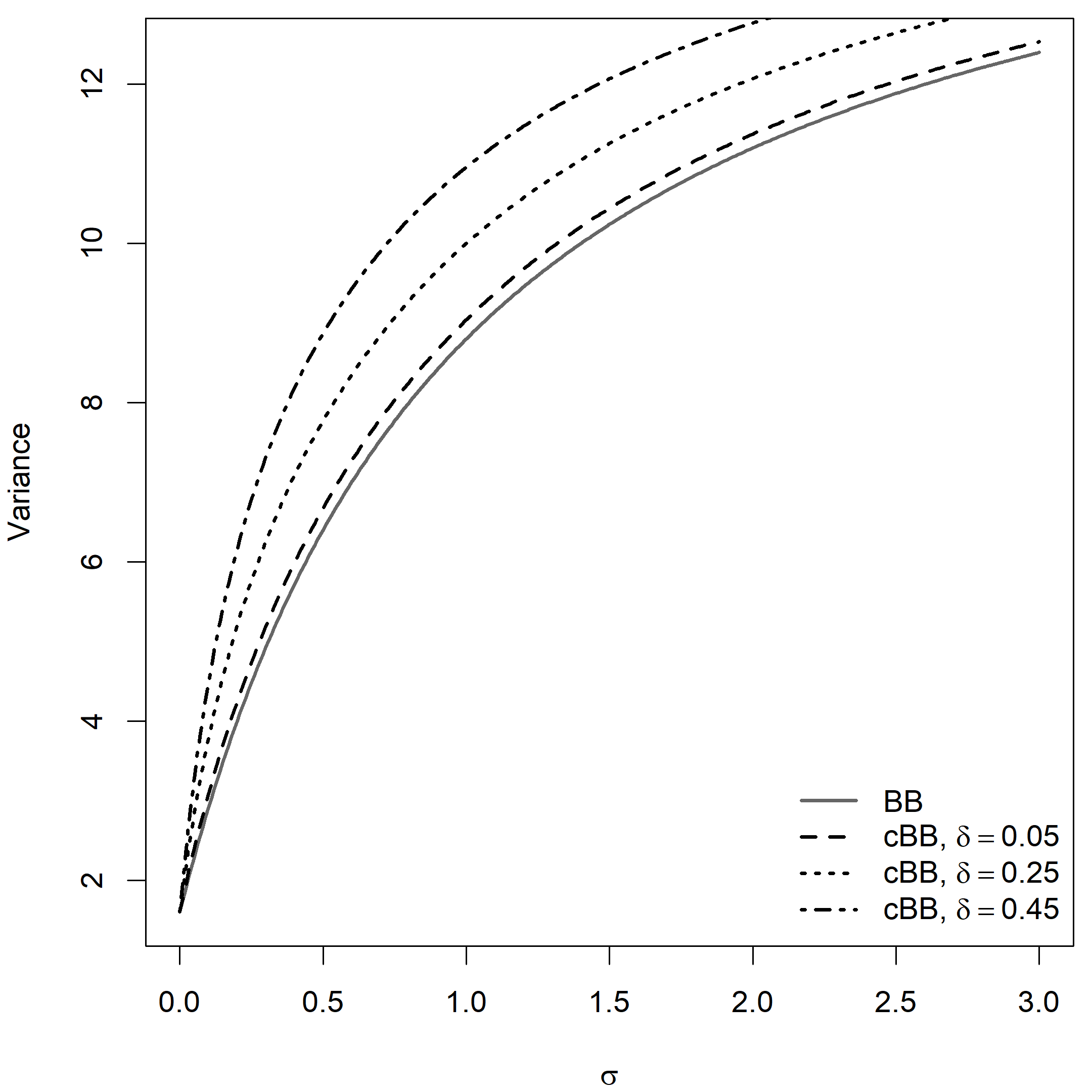}
		\caption{Different values of $\delta$.}
	\end{subfigure}
 \begin{subfigure}[h]{0.48\textwidth}
		\centering
		\includegraphics[scale=0.48]{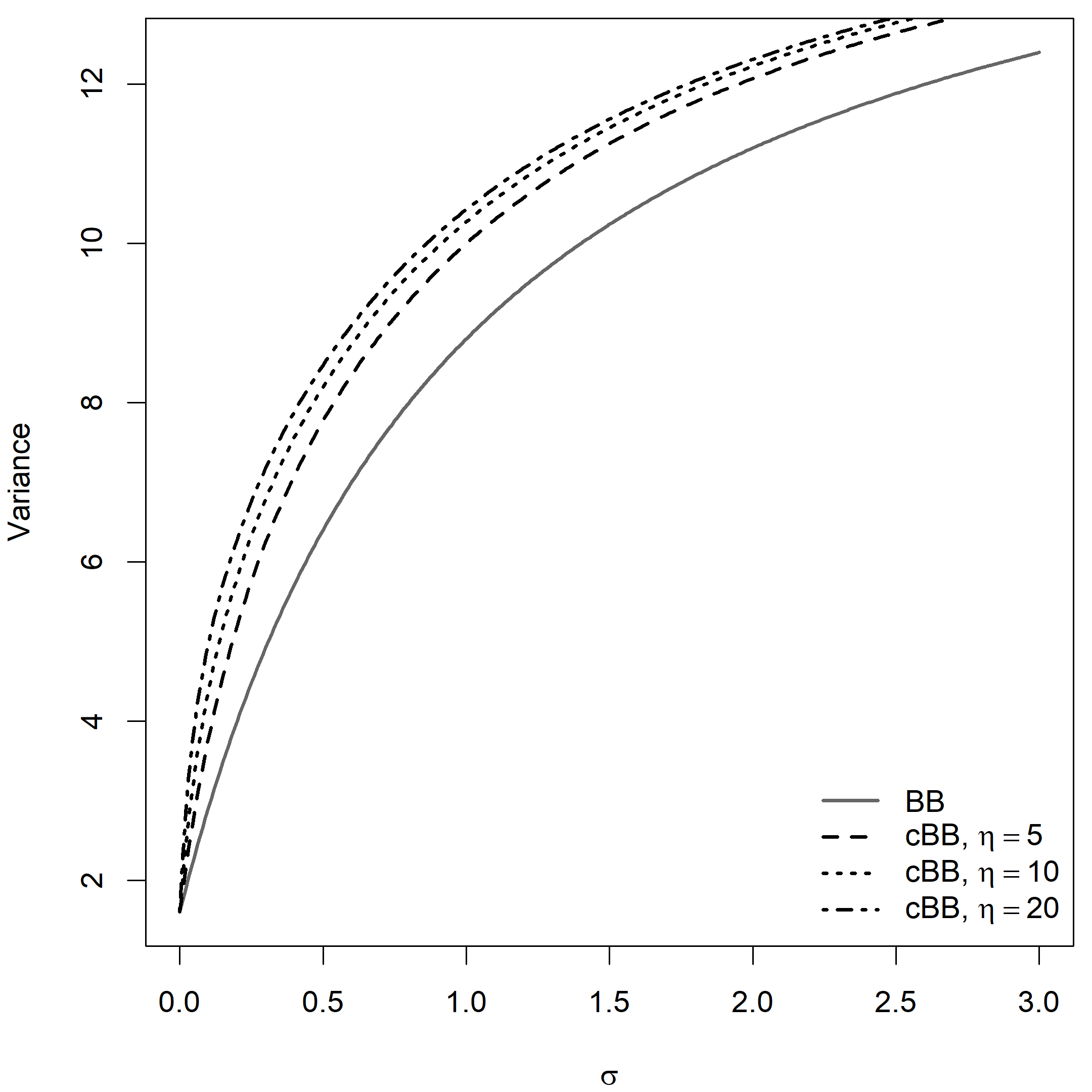}
		\caption{Different values of $\eta$. }
	\end{subfigure}
	\caption{Examples illustrating higher variance \eqref{eq cBB var} of the cBB-D with $m=10$, $\pi =0.2$, and $\sigma=0.1$ compared to the BB-D variance \eqref{eq variance betabinomial} for $\pi \in (0,1)$ and increasing values of $\sigma$ for varying values of $\delta$ (when $\eta=5$) and $\eta$ (when $\delta=0.25$).}
\label{plot variance cBB}
\end{figure}

\begin{figure}[h!]
	\centering
	\begin{subfigure}[h]{0.48\textwidth}
		\centering
  \includegraphics[scale=0.48]{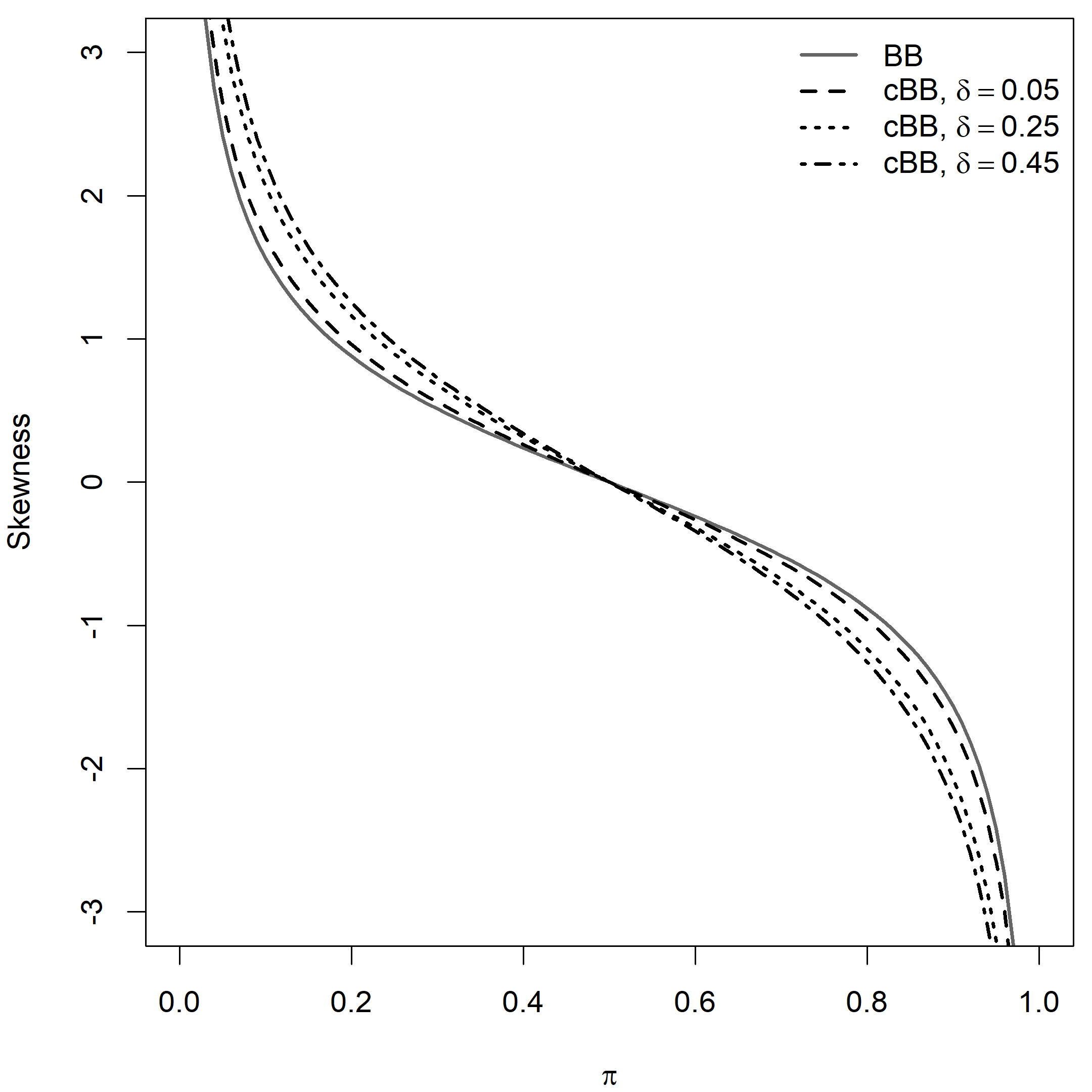}
		\caption{Different values of $\delta$.}
	\end{subfigure}
 \begin{subfigure}[h]{0.48\textwidth}
		\centering
		\includegraphics[scale=0.48]{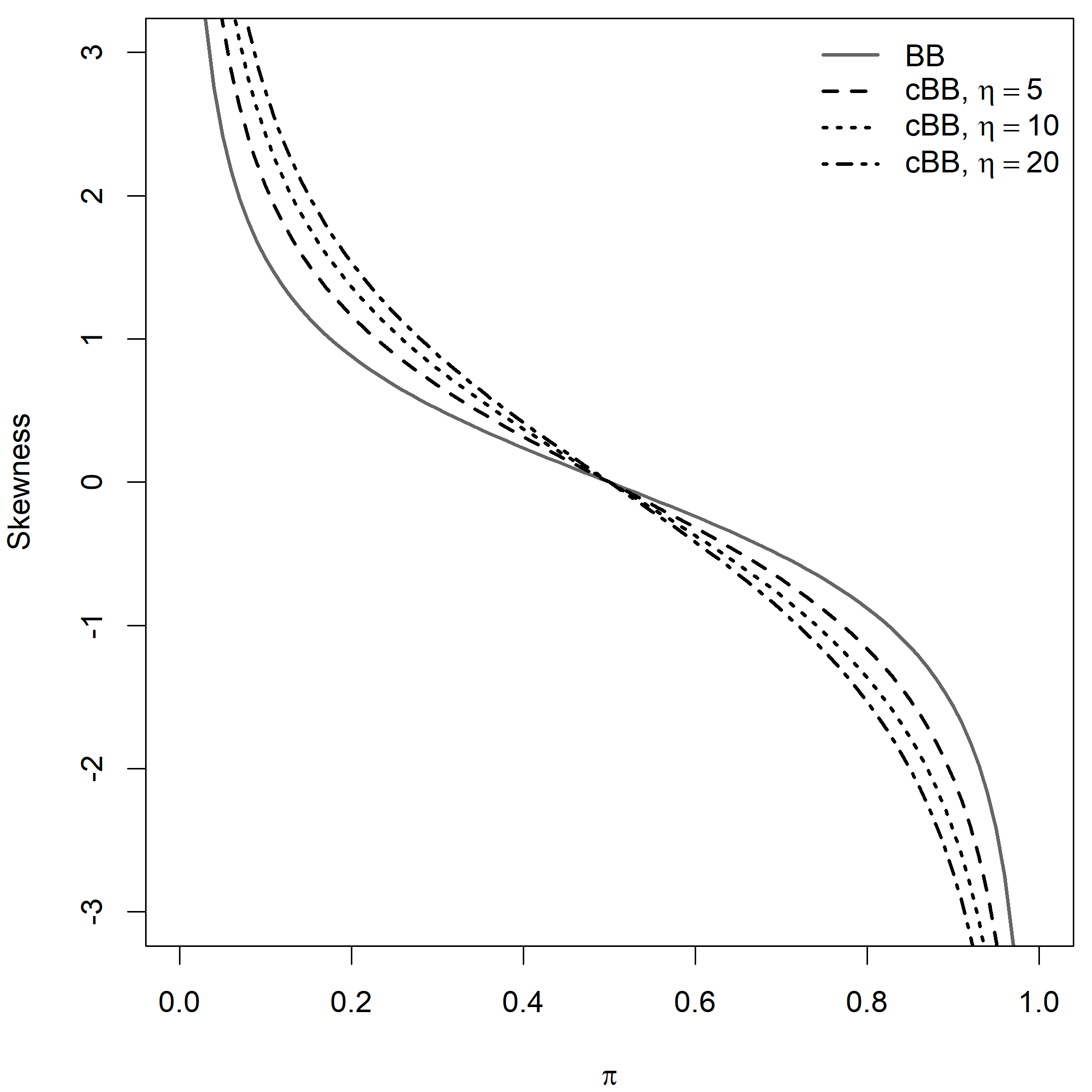}
		\caption{Different values of $\eta$. }
	\end{subfigure}
 	\begin{subfigure}[h]{0.48\textwidth}
		\centering
  \includegraphics[scale=0.48]{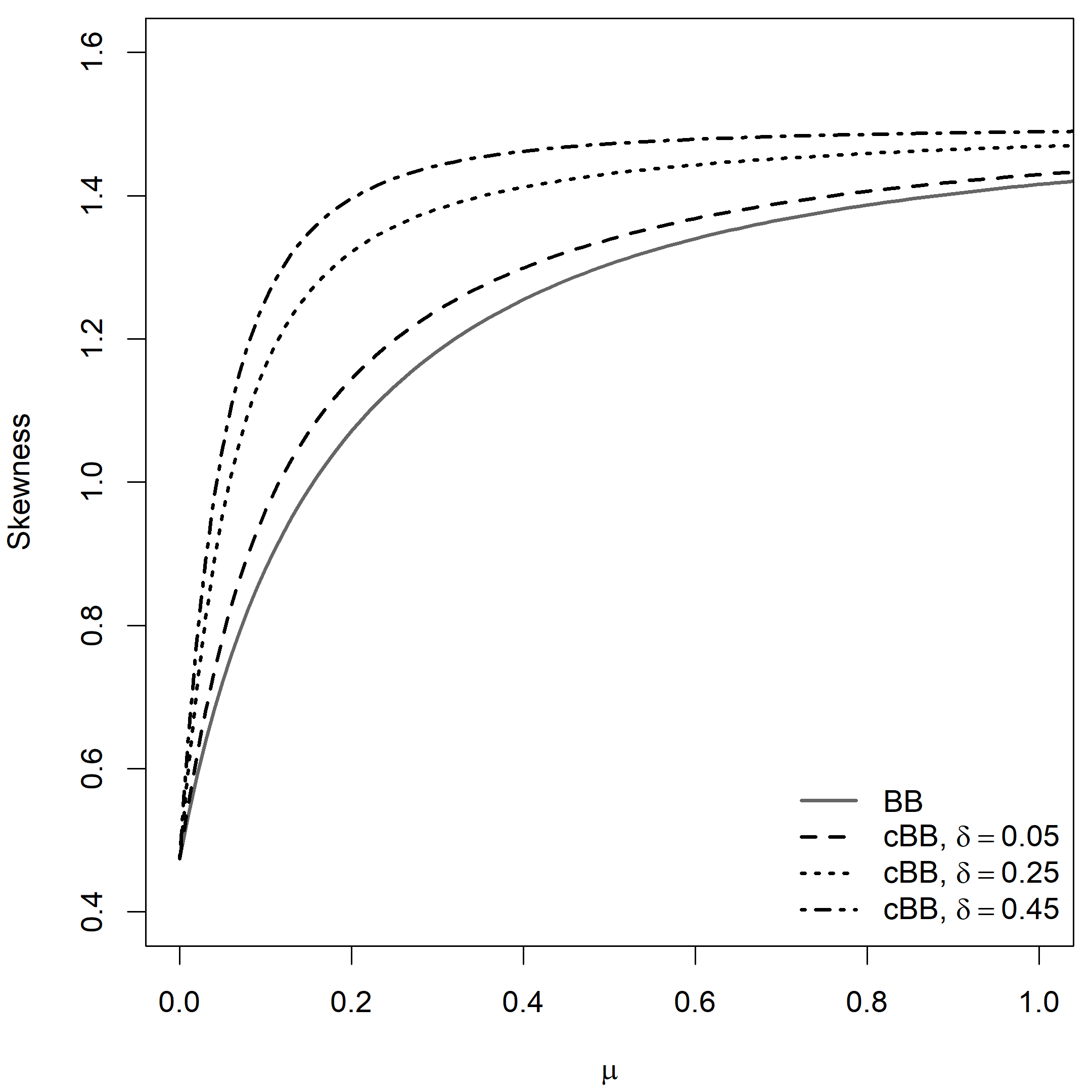}
		\caption{Different values of $\delta$.}
	\end{subfigure}
 \begin{subfigure}[h]{0.48\textwidth}
		\centering
		\includegraphics[scale=0.48]{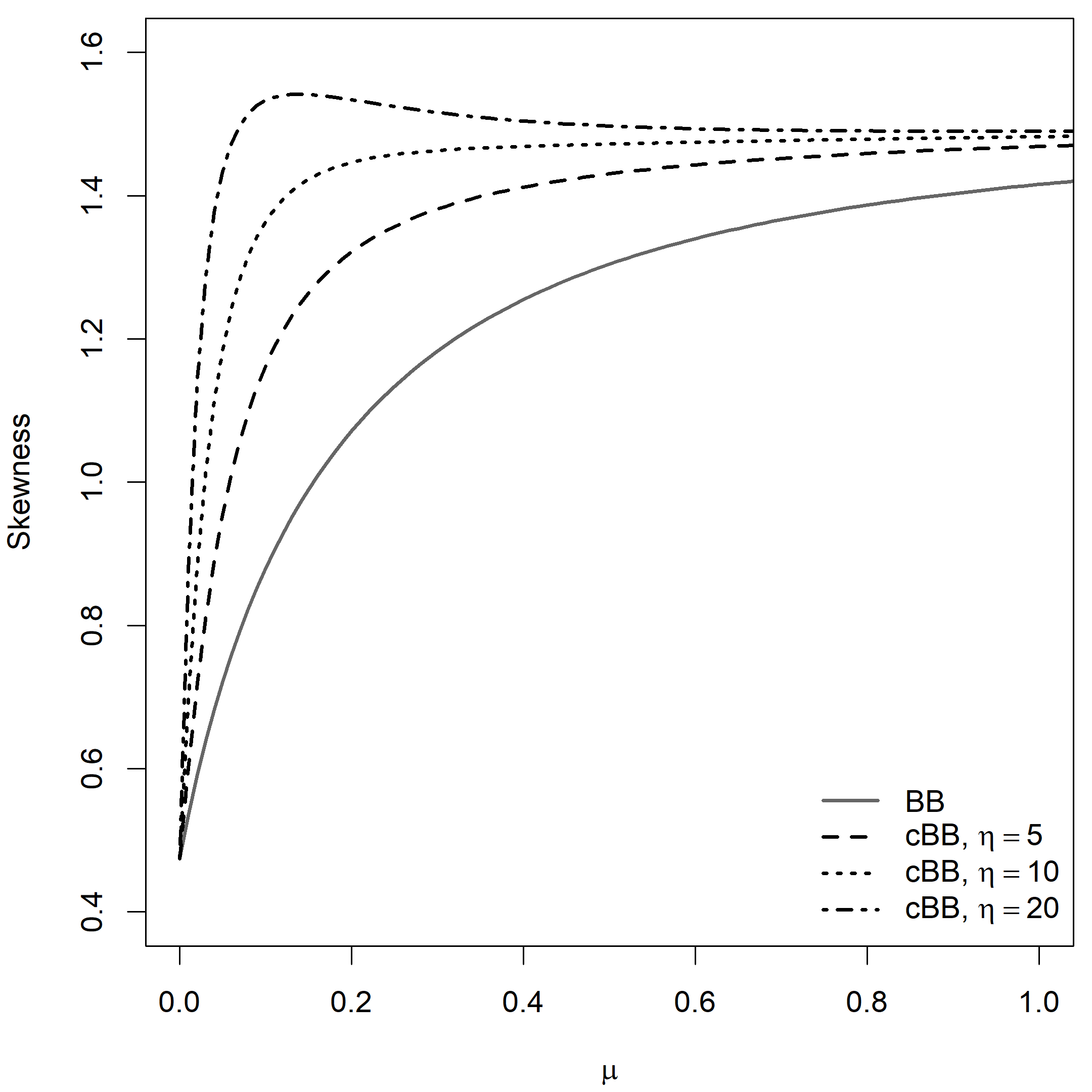}
		\caption{Different values of $\eta$. }
	\end{subfigure}
	\caption{Examples illustrating flexible skewness of the cBB-D \eqref{eq cBB skew} with $m=10$, $\pi =0.2$, and $\sigma=0.1$ compared to the BB-D skewness \eqref{eq BB skew} for $\pi \in (0,1)$ and increasing values of $\sigma$ for varying values of $\delta$ (when $\eta=5$) and $\eta$ (when $\delta=0.25$).}
\label{plot skewness cBB}
\end{figure}

\begin{figure}[h!]
	\centering
	\begin{subfigure}[h]{0.48\textwidth}
		\centering
  \includegraphics[scale=0.48]{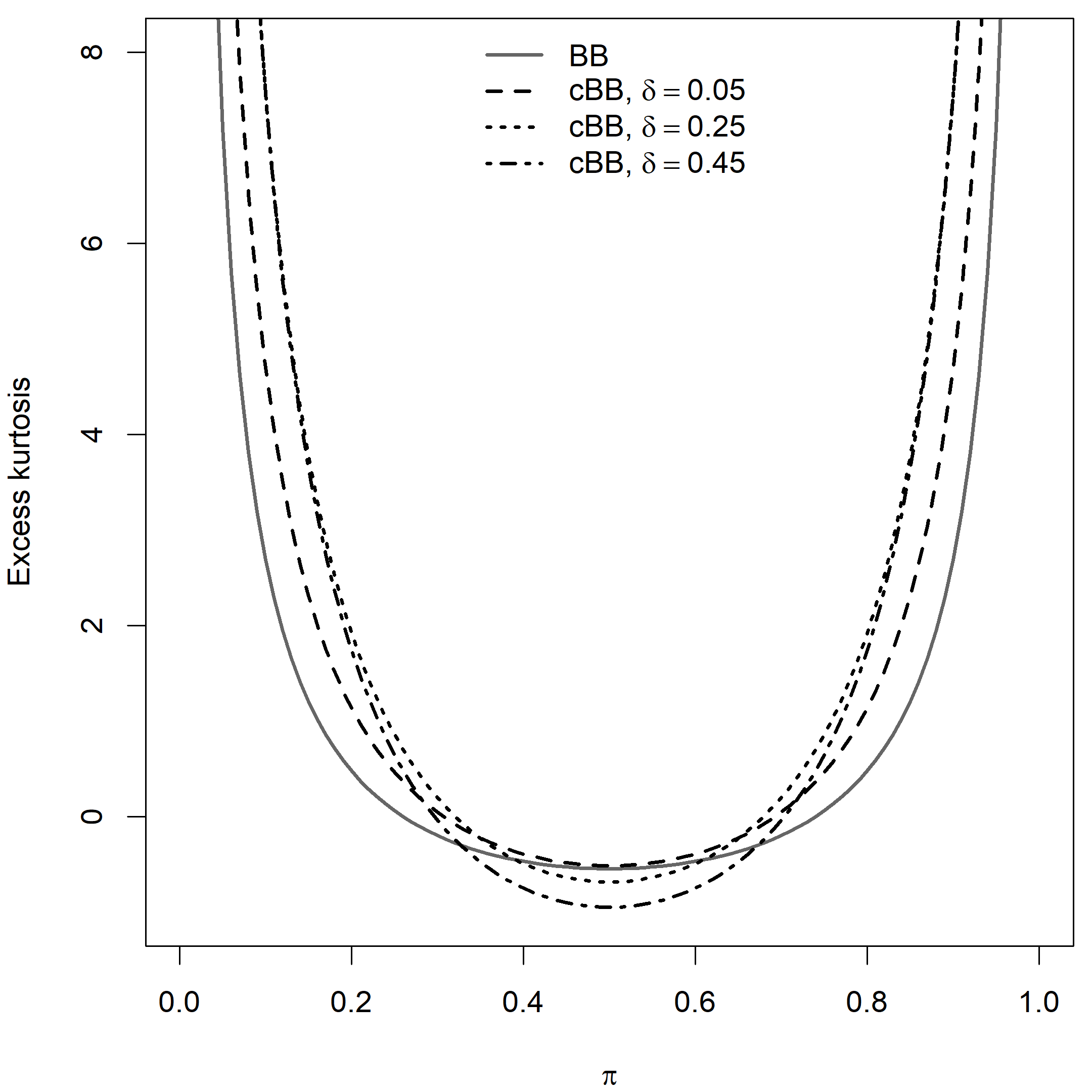}
		\caption{Different values of $\delta$.}
	\end{subfigure}
 \begin{subfigure}[h]{0.48\textwidth}
		\centering
		\includegraphics[scale=0.48]{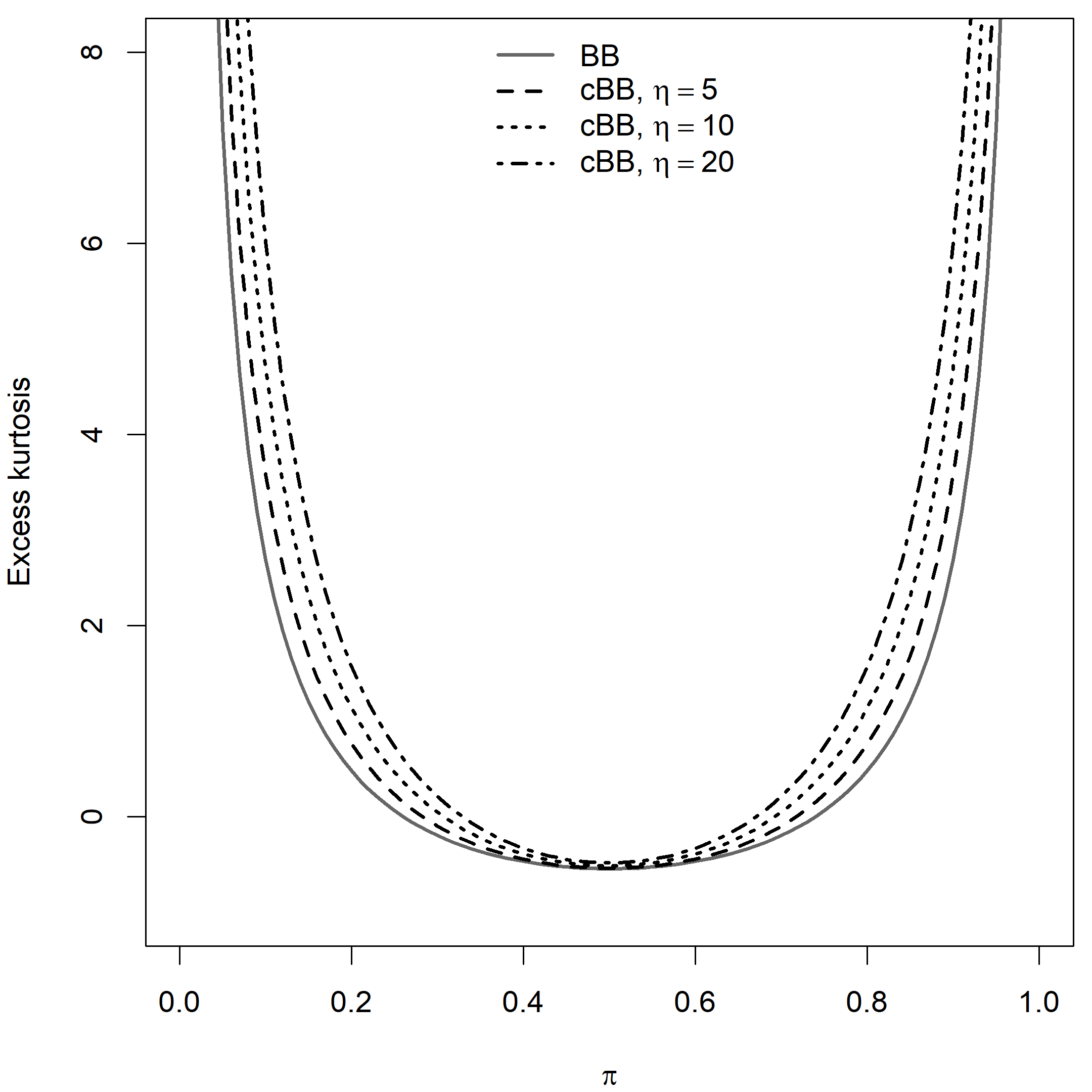}
		\caption{Different values of $\eta$. }
	\end{subfigure}
 	\begin{subfigure}[h]{0.48\textwidth}
		\centering
  \includegraphics[scale=0.48]{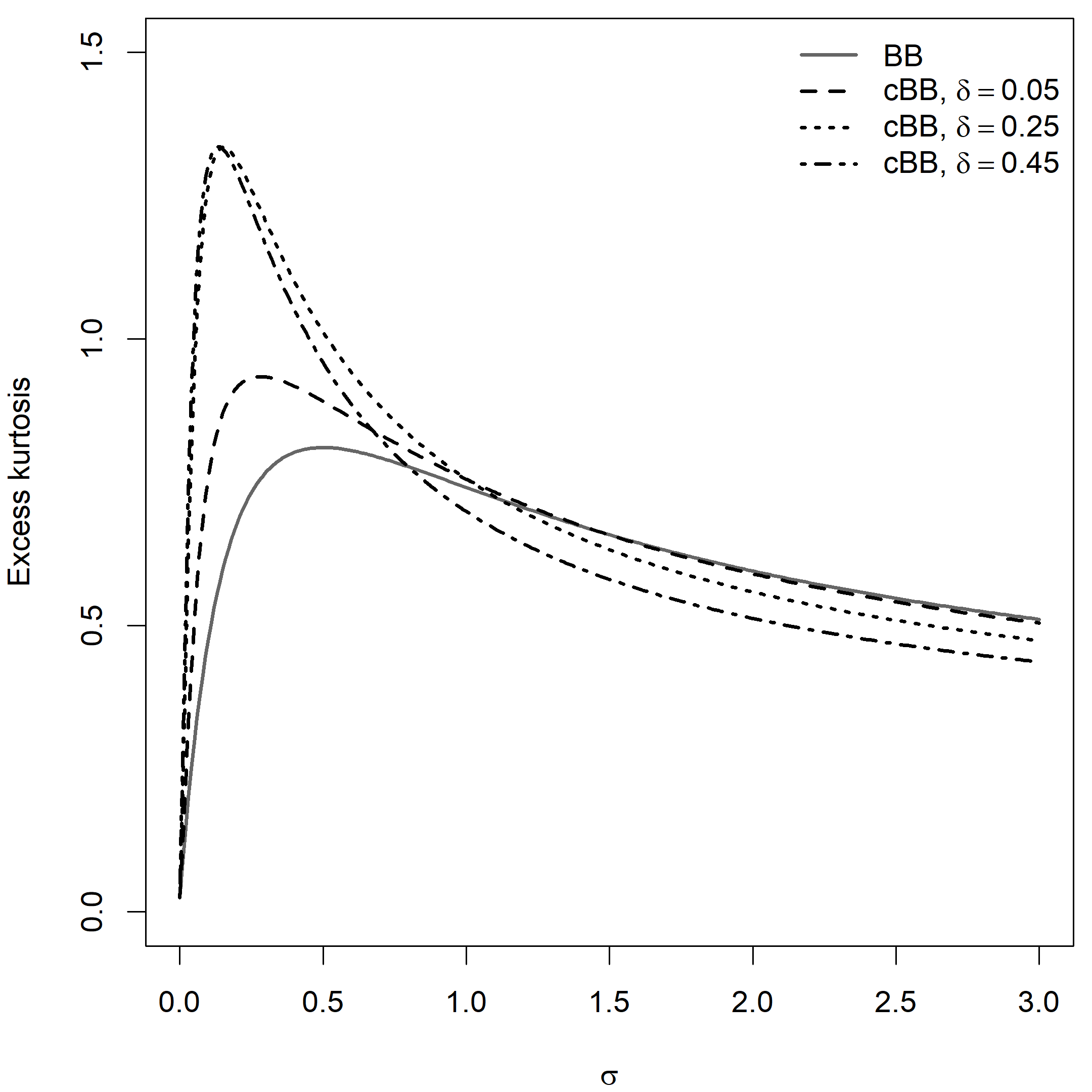}
		\caption{Different values of $\delta$.}
	\end{subfigure}
 \begin{subfigure}[h]{0.48\textwidth}
		\centering
		\includegraphics[scale=0.48]{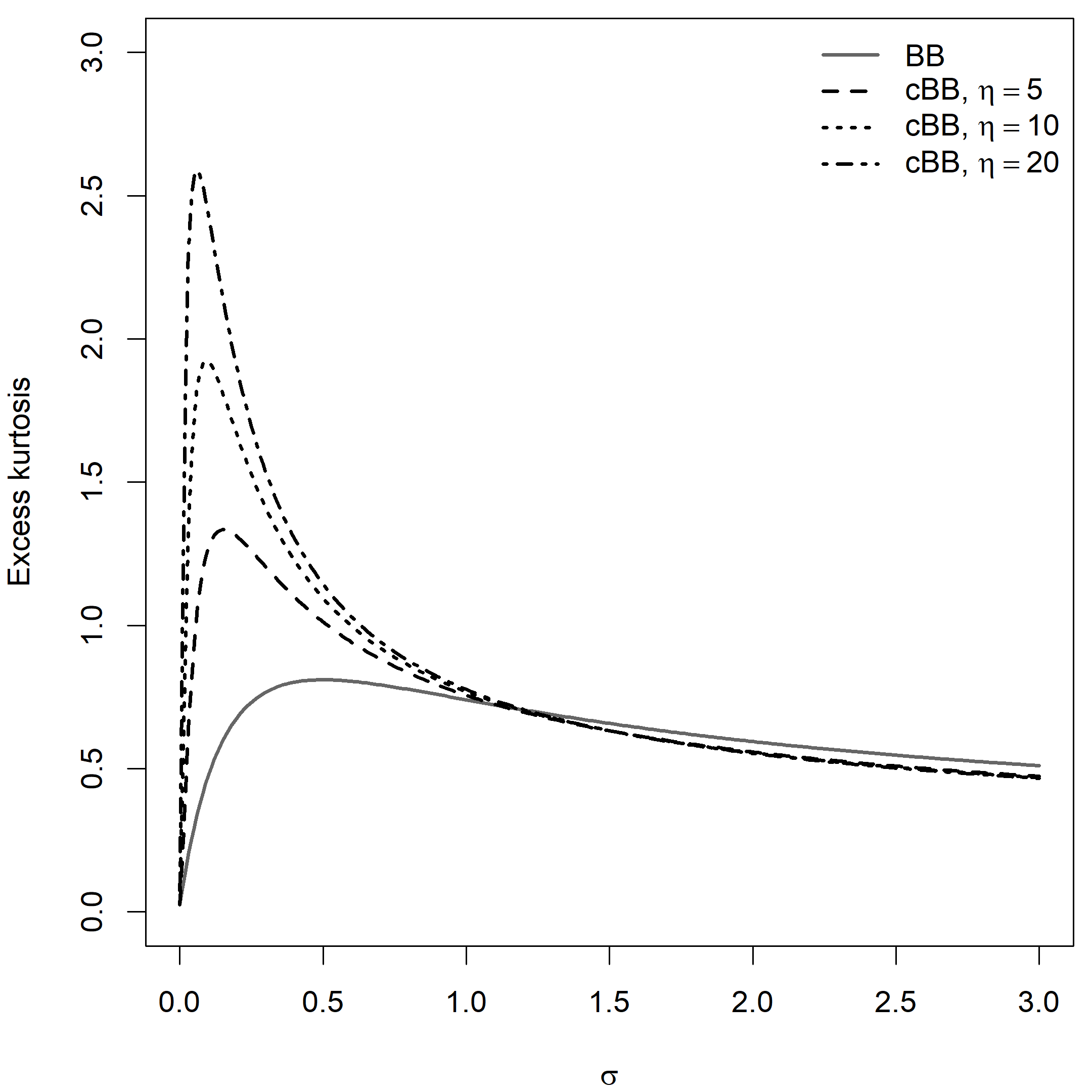}
		\caption{Different values of $\eta$. }
	\end{subfigure}
	\caption{Examples illustrating larger excess kurtosis of the cBB-D \eqref{eq cBB kurt} with $m=10$, $\pi =0.2$, and $\sigma=0.1$ compared to the BB-D excess kurtosis \eqref{eq BB kurt} for $\pi \in (0,1)$ and increasing values of $\sigma$ for varying values of $\delta$ (when $\eta=5$) and $\eta$ (when $\delta=0.25$).}
\label{plot kurt cBB}
\end{figure}

In Proposition \ref{proposition 1} we explore some limiting cases of the cBB-D.
\begin{prop}
\label{proposition 1}
    Let $Y\sim c\mathcal{BB}_m(\pi,\sigma,\delta,\eta)$, then:
\begin{enumerate}[label=(\itshape\alph*\upshape)]
	\item\label{item:prop1} if  $\delta \to 1^-$, then $Y\overset{D}{\to}\mathcal{BB}_m(\pi,\sigma)$; 
	\item\label{item:prop2} if  $\eta \to 1^+$, then $Y\overset{D}{\to}\mathcal{BB}_m(\pi,\sigma)$;
 \item\label{item:prop3} if  $\delta \to 1^-$ and $\sigma\to0^+$, then $Y\overset{D}{\to}\mathcal{B}_m(\pi)$; 
	\item\label{item:prop4} if  $\eta\to 1^+$ and $\sigma\to0^+$, then $Y\overset{D}{\to}\mathcal{B}_m(\pi)$;
 \end{enumerate}
\end{prop}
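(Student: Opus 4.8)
The plan is to reduce all four assertions to a single elementary fact: since every distribution in sight is supported on the finite set $\{0,1,\dots,m\}$, convergence in distribution there is equivalent to pointwise convergence of the probability mass functions. So for each item I would fix $y\in\{0,1,\dots,m\}$, let the indicated parameter tend to its stated boundary in the cBB PMF \eqref{pdf contaminated beta binomial}, and identify the pointwise limit with the PMF \eqref{pmf mean beta binomial distribution} of $\mathcal{BB}_m(\pi,\sigma)$ for \ref{item:prop1}--\ref{item:prop2}, and with the PMF \eqref{eq:Binom PMF} of $\mathcal{B}_m(\pi)$ for \ref{item:prop3}--\ref{item:prop4}. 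I would use throughout that $f_{\text{BB}_m}(y;\pi,\sigma)$ is a ratio of beta functions with strictly positive arguments, hence jointly continuous (indeed smooth) in $(\pi,\sigma)\in(0,1)\times(0,\infty)$.

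Parts \ref{item:prop1} and \ref{item:prop2} require no computation. For \ref{item:prop1}, letting $\delta\to1^-$ makes the mixing weight of one beta-binomial component vanish and that of the other tend to $1$, so \eqref{pdf contaminated beta binomial} degenerates to a single term, the PMF of $\mathcal{BB}_m(\pi,\sigma)$. For \ref{item:prop2}, letting $\eta\to1^+$ sends the contaminant dispersion $\eta\sigma$ to $\sigma$; by continuity of $f_{\text{BB}_m}(y;\pi,\cdot)$ the contaminant PMF converges to $f_{\text{BB}_m}(y;\pi,\sigma)$, and since the weights sum to one the whole mixture converges to $f_{\text{BB}_m}(y;\pi,\sigma)$ as well.

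For parts \ref{item:prop3} and \ref{item:prop4} the one real ingredient is the limit $f_{\text{BB}_m}(y;\pi,\sigma)\to f_{\text{B}_m}(y;\pi)$ as $\sigma\to0^+$. I would establish it by rewriting the beta-function ratio in \eqref{pmf mean beta binomial distribution} as a ratio of rising factorials,
\[
f_{\text{BB}_m}(y;\pi,\sigma)={m\choose y}\,\frac{(\pi/\sigma)_{y}\,\bigl((1-\pi)/\sigma\bigr)_{m-y}}{(1/\sigma)_{m}},\qquad (x)_n:=x(x+1)\cdots(x+n-1),
\]
and then letting $\sigma\to0^+$: each of the three rising factorials equals its leading power, $(\pi/\sigma)^{y}$, $((1-\pi)/\sigma)^{m-y}$ and $(1/\sigma)^{m}$ respectively, times $1+O(\sigma)$, so the powers of $\sigma^{-1}$ cancel and the ratio tends to $\pi^{y}(1-\pi)^{m-y}$, i.e.\ to the binomial PMF \eqref{eq:Binom PMF}. (An alternative route is the hierarchical representation of $\mathcal{BB}_m(\pi,\sigma)$ as a binomial mixed over a $\mathrm{Beta}(\pi/\sigma,(1-\pi)/\sigma)$ success probability: its variance is $\pi(1-\pi)\sigma/(1+\sigma)\to0$, so the mixing law collapses to the point mass at $\pi$, and bounded convergence against the bounded continuous map $p\mapsto{m\choose y}p^{y}(1-p)^{m-y}$ on $[0,1]$ gives the same limit.) Granting this, \ref{item:prop3} and \ref{item:prop4} follow immediately: as $\sigma\to0^+$ we also have $\eta\sigma\to0^+$, so both components of \eqref{pdf contaminated beta binomial} converge to $f_{\text{B}_m}(y;\pi)$ and hence so does their convex combination for any $\delta$ — the additional hypothesis ($\delta\to1^-$, resp.\ $\eta\to1^+$) being consistent with, but not actually required by, the argument.

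The only step that is not pure substitution or continuity is the $\sigma\to0^+$ asymptotics of the beta-function ratio, and the single point needing care there is to check that the three $\sigma^{-1}$-powers cancel exactly and that the $1+O(\sigma)$ remainders are genuine (uniformly over the finitely many values of $y$). Everything else in the proposition then drops out.
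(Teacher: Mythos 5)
Your overall strategy---reducing convergence in distribution on the finite support $\{0,1,\dots,m\}$ to pointwise convergence of the PMFs, and proving the $\sigma\to0^+$ collapse of $\mathcal{BB}_m(\pi,\sigma)$ to $\mathcal{B}_m(\pi)$ via the rising-factorial form of the beta-function ratio---is sound, and in fact more self-contained than the paper's argument, which works through the hierarchical representation $W\sim\mathcal{TP}_{\{1,\eta\}}(\delta)$, $Y\mid W=w\sim\mathcal{BB}_m(\pi,w\sigma)$ and cites an external result for the $\sigma\to0^+$ step. Your parts \ref{item:prop2}, \ref{item:prop3} and \ref{item:prop4} are correct, and your remark that in \ref{item:prop3}--\ref{item:prop4} the mixing weight is irrelevant (because both components tend to the binomial PMF once $\sigma\to0^+$ and $\eta$ is fixed) is a genuine and correct strengthening.

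Part \ref{item:prop1}, however, is wrong as you have argued it. In the PMF \eqref{pdf contaminated beta binomial} the weight $\delta$ multiplies the \emph{contaminant} component $f_{\text{BB}_m}(y;\pi,\eta\sigma)$ and $1-\delta$ multiplies the reference component. Hence as $\delta\to1^-$ the mixture degenerates to the single term $f_{\text{BB}_m}(y;\pi,\eta\sigma)$, i.e.\ $Y\overset{D}{\to}\mathcal{BB}_m(\pi,\eta\sigma)$, which differs from $\mathcal{BB}_m(\pi,\sigma)$ whenever $\eta>1$. The condition that actually yields $\mathcal{BB}_m(\pi,\sigma)$ is $\delta\to0^+$, and the paper's own appendix proof silently proves exactly that version ($W\overset{D}{\to}1$ requires the probability $\delta$ of the value $\eta$ to vanish); the proposition as printed contains a typo that you have reproduced rather than caught. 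Your sentence claiming that the surviving single term is ``the PMF of $\mathcal{BB}_m(\pi,\sigma)$'' identifies the wrong component. Part \ref{item:prop3} escapes this problem only for the reason you yourself give: with $\sigma\to0^+$ both components converge to $f_{\text{B}_m}(y;\pi)$, so the limit is $\mathcal{B}_m(\pi)$ no matter which weight wins. To repair \ref{item:prop1} you should either replace the hypothesis by $\delta\to0^+$ or replace the conclusion by $\mathcal{BB}_m(\pi,\eta\sigma)$, and say explicitly which you are doing.
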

\noindent where $\overset{D}{\to}$ denotes convergence in distribution.
\begin{proof}
    See \ref{ProofinApp}.
\end{proof}

\newpage
\FloatBarrier

\subsection{The contaminated beta-binomial regression model}
\label{Section Contaminated beta binomial regression model}

In traditional regression models, the focus is typically placed on modelling only the mean, with the assumption that other parameters, like the dispersion or contamination parameters in our case, remain fixed. 
This can be limiting, especially in situations where, for example, the dispersion in the data may be influenced by covariates and make the assumption of a constant $\sigma$ inappropriate.
Moreover, we allow for different covariates for each parameter: this offers greater flexibility, which enables the model to account for the distinct factors that affect various aspects of the cBB distribution: the mean, dispersion, proportion of contamination, and degree of contamination.

Let $\bx$, $\bu$, $\bv$, and $\bz$ represent possible values of the covariates $\bX$, $\bU$, $\bV$, and $\bZ$, which have dimensions $p$, $q$, $r$, and $s$, respectively. 
These covariates are used to model the parameters $\pi$, $\sigma$, $\delta$, and $\eta$, respectively, of the cBB-D.
The cBB regression model (cBB-RM) is then specified through the following link functions:
\begin{align*}
g_1({\pi}(\bx;\bbeta)) 
& = \logit({\pi}(\bx;\bbeta))
= \widetilde{\bx}'\bbeta,\\
g_2(\sigma(\bu;\balpha)) 
& = \log(\sigma(\bu;\balpha)) 
= \widetilde{\bu}'\balpha,\\
g_3({\delta}(\bv;\bgamma)) 
& =  \logit(\delta(\bv;\bgamma))
=  \widetilde{\bv}'\bgamma,\\
g_4(\eta(\bz;\blambda)) 
& =  \log(\eta(\bz;\blambda) - 1) 
=  \widetilde{\bz}'\blambda,
\end{align*}
where $\bbeta=(\beta_0,\beta_1,\dots,\beta_p)'$, $\balpha=(\alpha_0,\alpha_1,\dots,\alpha_q)'$, $\bgamma=(\gamma_0,\gamma_1,\dots,\gamma_r)'$, and $\blambda=(\lambda_0,\lambda_1,\dots,\lambda_s)$ are vectors of unknown regression coefficients, $\widetilde{\bx}=(1,\bx')'$, $\widetilde{\bu}=(1,\bu')'$, $\widetilde{\bv}=(1,\bv')'$, and $\widetilde{\bz}=(1,\bz')'$ account for the intercept, and $\mathrm{log}$ represents the natural logarithm.
Naturally, the considered link functions, even if the most commonly used, are only examples of possible functions that can be considered.
The inverse of the considered link functions leads to
\begin{align*}
 \pi(\bx;\bbeta) &= g_1^{-1}(\widetilde{\bx}'\bbeta) =\frac{\mathrm{e}^{\widetilde{\bx}'\bbeta}}{1+\mathrm{e}^{\widetilde{\bx}'\bbeta}},\\
\sigma(\bu;\balpha) &= g_2^{-1}(\widetilde{\bu}'\balpha) =\mathrm{e}^{\widetilde{\bu}'\balpha},\\
\delta(\bv;\bgamma) &= g_3^{-1}(\widetilde{\bv}'\bgamma) =\frac{\mathrm{e}^{\widetilde{\bv}'\bgamma}}{1+\mathrm{e}^{\widetilde{\bv}'\bgamma}},\\
\eta(\bz;\blambda)  &= g_4^{-1}(\widetilde{\bz}'\blambda) =\mathrm{e}^{\widetilde{\bz}'\blambda}+1.
\end{align*}
The conditional distribution of $Y$ according to the cBB-RM can also be written as
\begin{equation}
    Y|\bX=\bx,\bU=\bu,\bV=\bv,\bZ=\bz \sim c\mathcal{BB}_m\left(\pi(\bx;\bbeta),\sigma(\bu;\balpha),\delta(\bv;\bgamma),\eta(\bz;\blambda)\right), 
    \label{eq:cBB-RM}
\end{equation}
with $m$, which is allowed to vary across sample observations.

\section{Maximum likelihood estimation: an EM algorithm}
\label{Section maximum likelihood estimation}

In this section, we present an EM algorithm for maximum likelihood (ML) estimation of the parameters of the cBB-RM (Section \ref{section em algorithm}), followed by a discussion on the initialization strategy and convergence criteria used (Section \ref{Section initialization and covergence}).

\subsection{An EM algorithm}
\label{section em algorithm}

Let $(\bx_1',\bu_1',\bv_1',\bz_1',y_1),\dots,(\bx_n',\bu_n',\bv_n',\bz_n',y_n)$ be the observed sample from \eqref{eq:cBB-RM}, with $m_1,\ldots,m_n$ denoting the maximum possible counts for each unit.
For the application of the EM algorithm, it is convenient to view the observed data as incomplete. 
In this case, the source of incompleteness stems from the fact that we do not know if the generic data point $(\bx_i',\bu_i',\bv_i',\bz_i',y_i)$ comes from the reference or contaminant BB-RM. 
To denote the source of incompleteness, we use an indicator vector $\bw=(w_1,\dots,w_n)$ so that $w_i=1$ if $(\bx_i',\bu_i',\bv_i',\bz_i',y_i)$ comes from the contaminant BB-RM and $w_i=0$ otherwise.
The complete-data are thus given by $(\bx_1',\bu_1',\bv_1',\bz_1',y_1, w_1),\dots,(\bx_n',\bu_n',\bv_n',\bz_n',y_n, w_n)$ and, from (\ref{pdf contaminated beta binomial}), the complete-data likelihood function can be written as 
\begin{align*}
    L_c(\bbeta,\balpha, \bgamma, \blambda) = & \prod^{n}_{i=1}\left[(1-\delta(\bv_i;\bgamma)) f_{\text{BB}_{m_i}}(y_i;{\pi}(\bx_i;\bbeta),{\sigma}(\bu_i;\balpha))\right]^{1-w_i}\\
    &\times\left[\delta(\bv_i;\bgamma) f_{\text{BB}_{m_i}}(y_i;{\pi}(\bx_i;\bbeta),{\eta}(\bz_i;\blambda){\sigma}(\bu_i;\balpha))\right]^{wi}\\
    =&\prod_{i=1}^n\left[\frac{(1-\delta(\bv_i;\bgamma)) m_i!}{y_i!(m_i-y_i)!}\frac{\mathrm{B}\left(y_i+\frac{{\pi}(\bx_i;\bbeta)}{{\sigma}(\bu_i;\balpha)},m_i-y_i+\frac{1-{\pi}(\bx_i;\bbeta)}{{\sigma}(\bu_i;\balpha)}\right)}{\mathrm{B}\left(\frac{{\pi}(\bx_i;\bbeta)}{{\sigma}(\bu_i;\balpha)},\frac{1-{\pi}(\bx_i;\bbeta)}{{\sigma}(\bu_i;\balpha)}\right)}\right]^{1-w_i}\\
    &\times\left[\frac{\delta(\bv_i;\bgamma) m_i!}{y_i!(m_i-y_i)!}\frac{\mathrm{B}\left(y_i+\frac{{\pi}(\bx_i;\bbeta)}{{\eta}(\bz_i;\blambda) {\sigma}(\bu_i;\balpha)},m_i-y_i+\frac{1-{\pi}(\bx_i;\bbeta)}{{\eta}(\bz_i;\blambda) {\sigma}(\bu_i;\balpha)}\right)}{\mathrm{B}\left(\frac{{\pi}(\bx_i;\bbeta)}{{\eta}(\bz_i;\blambda) {\sigma}(\bu_i;\balpha)},\frac{1-{\pi}(\bx_i;\bbeta)}{{\eta}(\bz_i;\blambda) {\sigma}(\bu_i;\balpha)}\right)}\right]^{w_i}.
\end{align*}
The complete log-likelihood function then follows as
\begin{align}
l_c(\bbeta,\balpha,\bgamma,\blambda)=l_{{c_1}}(\bgamma)+l_{c_2}(\bbeta,\balpha,\blambda)\label{eq complete loglikelihood}
\end{align}
where
\begin{align*}
    l_{c_1}(\bgamma)=\sum^n_{i=1} (1-w_i) \mathrm{log}(1-\delta(\bv_i;\bgamma))+w_i\mathrm{log}\delta(\bv_i;\bgamma)
\end{align*}
and
\begin{align*}
    &l_{c_2}(\bbeta,\balpha,\blambda)\\
    =&\sum^n_{i=1}\mathrm{log}\left(\frac{ m_i!}{y_i!(m_i-y_i)!}\right)+ (1-w_i)\left[\mathrm{log}\mathrm{B}\left(y_i+\frac{{\pi}(\bx_i;\bbeta)}{{\sigma}(\bu_i;\balpha)},m_i-y_i+\frac{1-{\pi}(\bx_i;\bbeta)}{{\sigma}(\bu_i;\balpha)}\right)-\mathrm{log}\mathrm{B}\left(\frac{{\pi}(\bx_i;\bbeta)}{{\sigma}(\bu_i;\balpha)},\frac{1-{\pi}(\bx_i;\bbeta)}{{\sigma}(\bu_i;\balpha)}\right)\right]\nonumber\\
    &+ w_i\left[\mathrm{log}\mathrm{B}\left(y_i+\frac{{\pi}(\bx_i;\bbeta)}{{\eta}(\bz_i;\blambda){\sigma}(\bu_i;\balpha)},m_i-y_i+\frac{1-{\pi}(\bx_i;\bbeta)}{{\eta}(\bz_i;\blambda){\sigma}(\bu_i;\balpha)}\right)-\mathrm{log}\mathrm{B}\left(\frac{{\pi}(\bx_i;\bbeta)}{{\eta}(\bz_i;\blambda){\sigma}(\bu_i;\balpha)},\frac{1-{\pi}(\bx_i;\bbeta)}{{\eta}(\bz_i;\blambda){\sigma}(\bu_i;\balpha)}\right)\right]\nonumber.
\end{align*}
The algorithm iterates between the E-step and M-step until convergence. These steps for the $(k+1)$th iteration of the algorithm are detailed below.
\subsubsection*{E-step}
In the E-step, the conditional expectation of the complete-data log-likelihood function is computed as 
\begin{align*}
Q\left(\bbeta,\balpha,\bgamma,\blambda|\bbeta^{(k)},\balpha^{(k)},\bgamma^{(k)},\blambda^{(k)}\right) &= Q_1\left(\bgamma|\bbeta^{(k)},\balpha^{(k)},\bgamma^{(k)},\blambda^{(k)}\right)+Q_2\left(\bbeta,\balpha,\blambda|\bbeta^{(k)},\balpha^{(k)},\bgamma^{(k)},\blambda^{(k)}\right)
\end{align*}
for the $(k+1)$-th iteration, which is in the same order as (\ref{eq complete loglikelihood}). $Q\left(\bbeta,v,\delta,\eta|\bbeta^{(k)},v^{(k)},\delta^{(k)},\eta^{(k)}\right)$ is obtained by substituting $w_i$ in \eqref{eq complete loglikelihood} by the expected a posteriori probability for a point to be an extreme value
\begin{align}
	\text{E}\left(W_i|{y}_i,\bx_i,\bu_i, \bv_i, \bz_i;\bbeta^{(k)},\balpha^{(k)},\bgamma^{(k)},\blambda^{(k)}\right)
    =
    \frac{\delta^{(k)}f_{\text{BB}_{m_i}}\left(y_i;{\pi}\left(\bx_i;\bbeta^{(k)}\right),{\eta}\left(\bz_i;\blambda^{(k)}\right) {\sigma}\left(\bu_i;\balpha^{(k)}\right)\right)}{f_{\text{cBB}_{m_i}}\left(y_i;{\pi}\left(\bx_i;\bbeta^{(k)}\right),{\sigma}\left(\bu_i;\balpha^{(k)}\right),{\delta}\left(\bv_i;\bgamma^{(k)}\right) ,{\eta}\left(\bz_i;\blambda^{(k)}\right)\right)} := w_i^{(k)}.
\end{align}

\subsubsection*{M-step}

An update $\bgamma^{(k+1)}$ for $\bgamma$ is calculated by independently maximizing  
\begin{align*}	Q_1\left(\bgamma|\bbeta^{(k)},\balpha^{(k)},\bgamma^{(k)},\blambda^{(k)}\right)=\sum_{i=1}^{n}\left\{\left(1-w_i^{(k)}\right)\mathrm{log}\left(1-{\delta}\left(\bv_i;\bgamma^{(k)}\right)\right)+w_i^{(k)}\mathrm{log}{\delta}\left(\bv_i;\bgamma^{(k)}\right)\right\}
\end{align*}
with respect to $\bgamma$ and subjects to constraints on this parameter.  It follows that an update for the $(k+1)$-th iteration is given as  
\begin{align*}
	\bgamma^{(k+1)}={\frac{1}{n}\sum_{i=1}^{n}w_i^{(k)}}.
\end{align*}
Updates for $\bbeta,\sigma$ and $\eta$ are obtained by maximizing 
\begin{align}
&Q_2\left(\bbeta,\balpha,\blambda|\bbeta^{(k)},\balpha^{(k)},\bgamma^{(k)},\blambda^{(k)}\right) \\
=&
 \sum^n_{i=1}\left(1-w_i^{(k)}\right)\left[\mathrm{log}\mathrm{B}\left(y_i+\frac{{\pi}\left(\bx_i;\bbeta^{(k)}\right)}{{\sigma}\left(\bu_i;\balpha^{(k)}\right)},m_i-y_i+\frac{1-{\pi}\left(\bx_i;\bbeta^{(k)}\right)}{{\sigma}\left(\bu_i;\balpha^{(k)}\right)}\right)-\mathrm{log}\mathrm{B}\left(\frac{{\pi}\left(\bx_i;\bbeta^{(k)}\right)}{{\sigma}\left(\bu_i;\balpha^{(k)}\right)},\frac{1-{\pi}\left(\bx_i;\bbeta^{(k)}\right)}{{\sigma}\left(\bu_i;\balpha^{(k)}\right)}\right)\right]\nonumber\\
    &+ w_i^{(k)}\left[\mathrm{log}\mathrm{B}\left(y_i+\frac{{\pi}\left(\bx_i;\bbeta^{(k)}\right)}{{\eta}\left(\bz_i;\blambda^{(k)}\right){\sigma}\left(\bu_i;\balpha^{(k)}\right)},m_i-y_i+\frac{1-{\pi}\left(\bx_i;\bbeta^{(k)}\right)}{{\eta}\left(\bz_i;\blambda^{(k)}\right){\sigma}\left(\bu_i;\balpha^{(k)}\right)}\right)\right.\nonumber\\
    &-\left.\mathrm{log}\mathrm{B}\left(\frac{{\pi}\left(\bx_i;\bbeta^{(k)}\right)}{{\eta}\left(\bz_i;\blambda^{(k)}\right){\sigma}\left(\bu_i;\balpha^{(k)}\right)},\frac{1-{\pi}\left(\bx_i;\bbeta^{(k)}\right)}{{\eta}\left(\bz_i;\blambda^{(k)}\right){\sigma}\left(\bu_i;\balpha^{(k)}\right)}\right)\right]\nonumber.
\end{align} 
This can be achieved in \texttt{R} software using the \texttt{optim()} function included in the \textbf{stats} package. 
The Nelder-Mead or BFGS algorithms, which are used for unconstrained optimization, can be passed to \texttt{optim()} via the \texttt{method} argument. 
The algorithm iterates between the E-step and M-step until convergence and is elaborated on below.
\subsection{Initialization and convergence}\label{Section initialization and covergence}

The initial values are an essential element in EM-based algorithms and can significantly influence the accuracy and reliability of the model estimates; therefore, their choice represents a vital aspect of the estimation process.
Should the initial values be selected inadequately, the algorithm might settle at a local maximum rather than achieving the global maximum. Furthermore, if the initial values significantly diverge from the actual values, the algorithm may experience slow convergence. 

We recommend applying a standard BB-RM utilizing the same predictors as those used for cBB-RM in the data analysis. After that, the computed coefficients can be used as starting points for cBB-RM fitting.
For $\delta$ and $\eta$, we suggest choosing them such that the cBB-RM tends to the BB-RM, i.e., $\delta^{(0)}\to1^-$ and $\eta^{(0)}\to1^+$ (see 
 Proposition \ref{proposition 1}).
 
Several convergence criteria can be applied to determine whether or not the EM algorithm has converged. A prevalent approach involves monitoring the variation in the log-likelihood function across successive iterations.
If the change falls below a predetermined threshold, the algorithm can be considered to be converged, i.e. $l(\bbeta^{(k+1)},\balpha^{(k+1)},\bgamma^{(k+1)},\blambda^{(k+1)})-l(\bbeta^{(k)},\balpha^{(k)},\bgamma^{(k)},\blambda^{(k)})<\epsilon$. Due to the possibility of flat likelihoods, we chose a stopping criteria of $\epsilon=1 \times 10^{-10}$ or $1000$ iterations.

\section{Simulation study: sensitivity analysis} \label{Section simulation}

In this study, we performed a sensitivity analysis to examine how extreme observations affect the estimates of the B-RM, BB-RM, and cBB-RM. 
We generate 1000 datasets of sizes $n=500$ from the B-D, with $m=10$, an intercept of $\beta_0=2$, and a continuous covariate generated by a uniform distribution over the interval $(0,1)$ with slope $\beta_1=1$. 
The generated data is then augmented with extreme values using one of the following schemes:
\begin{enumerate}
    \item {$1\%$} of the generated $Y$-values are randomly substituted by data generated from a uniform distribution over the interval $(0,10)$.
    \item {$5\%$} of the generated $Y$-values are randomly substituted by data generated from a uniform distribution over the interval $(0,10)$.
\end{enumerate}
Examples of the aforementioned schemes are illustrated in \figurename~\ref{plot sensitivity example}, where the artificially modified observations are highlighted in red.  
The B-RM, BB-RM, and cBB-RN models are then fitted to the data to assess the impact of the red observations on the estimated regression coefficients.  
The bias and mean squared error (MSE), reported in \tablename~\ref{table sensitivity analysis}, are computed as
\begin{align*}
\text{Bias}(\hat\beta_h)=\left(\frac{1}{1000}\sum_{r=1}^{1000}\hat{\beta}_{hr}\right)-\beta_h
\end{align*}
and
\begin{align*}
    \text{MSE}(\hat\beta_h)=\frac{1}{1000}\sum_{r=1}^{1000}\left(\hat{\beta}_{hr}-\beta_h\right)^2,
\end{align*}
where $\hat{\beta}_{hr}$ is the estimate of $\beta_h$, $h=0,1$, obtained at the $r$th replication $r=1,\ldots,1000$.

\begin{figure}[!ht]
	\centering
	\begin{subfigure}[h]{0.48\textwidth}
		\centering
  \includegraphics[scale=0.48]{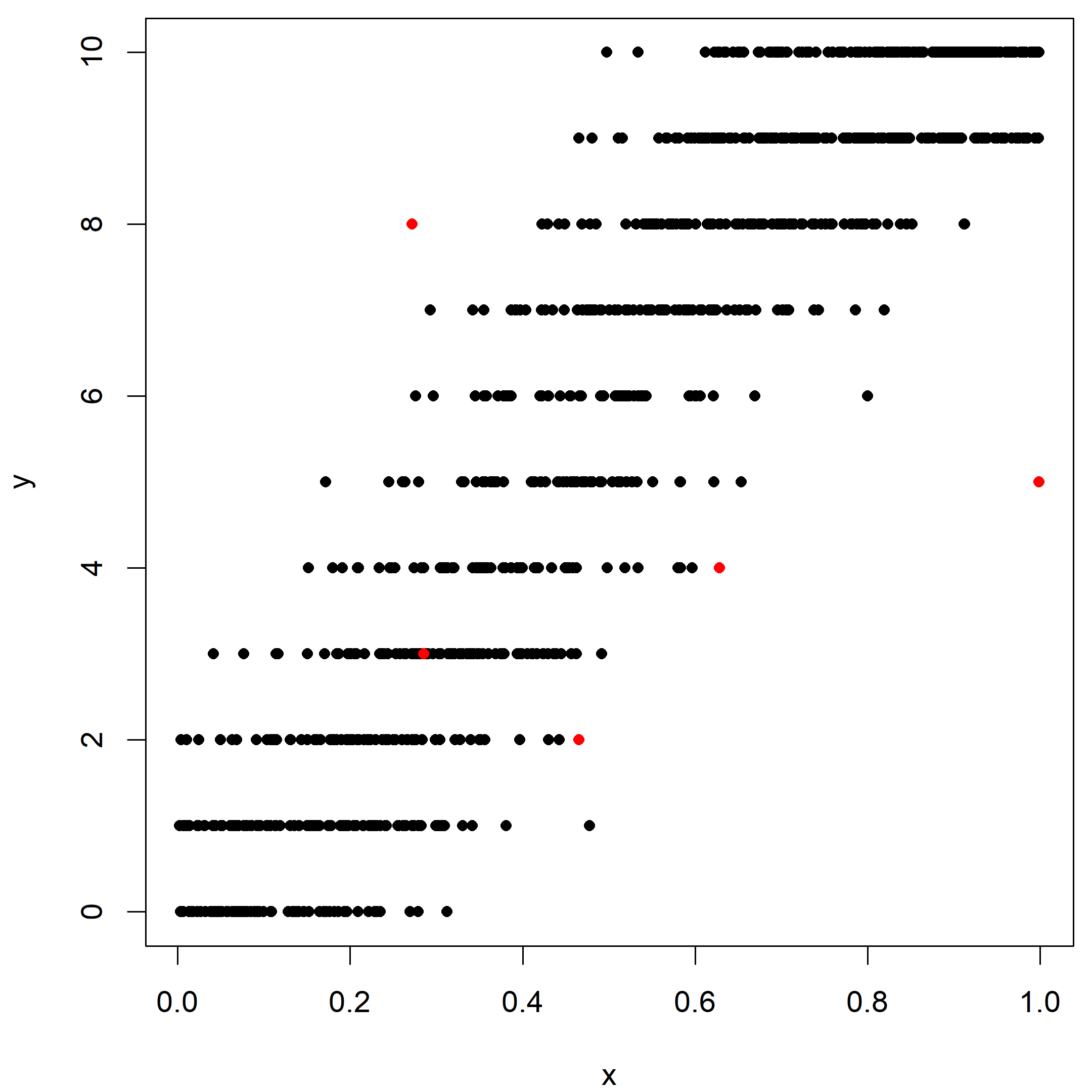}
		\caption{1\%.}
	\end{subfigure}
 \begin{subfigure}[h]{0.48\textwidth}
		\centering
		\includegraphics[scale=0.48]{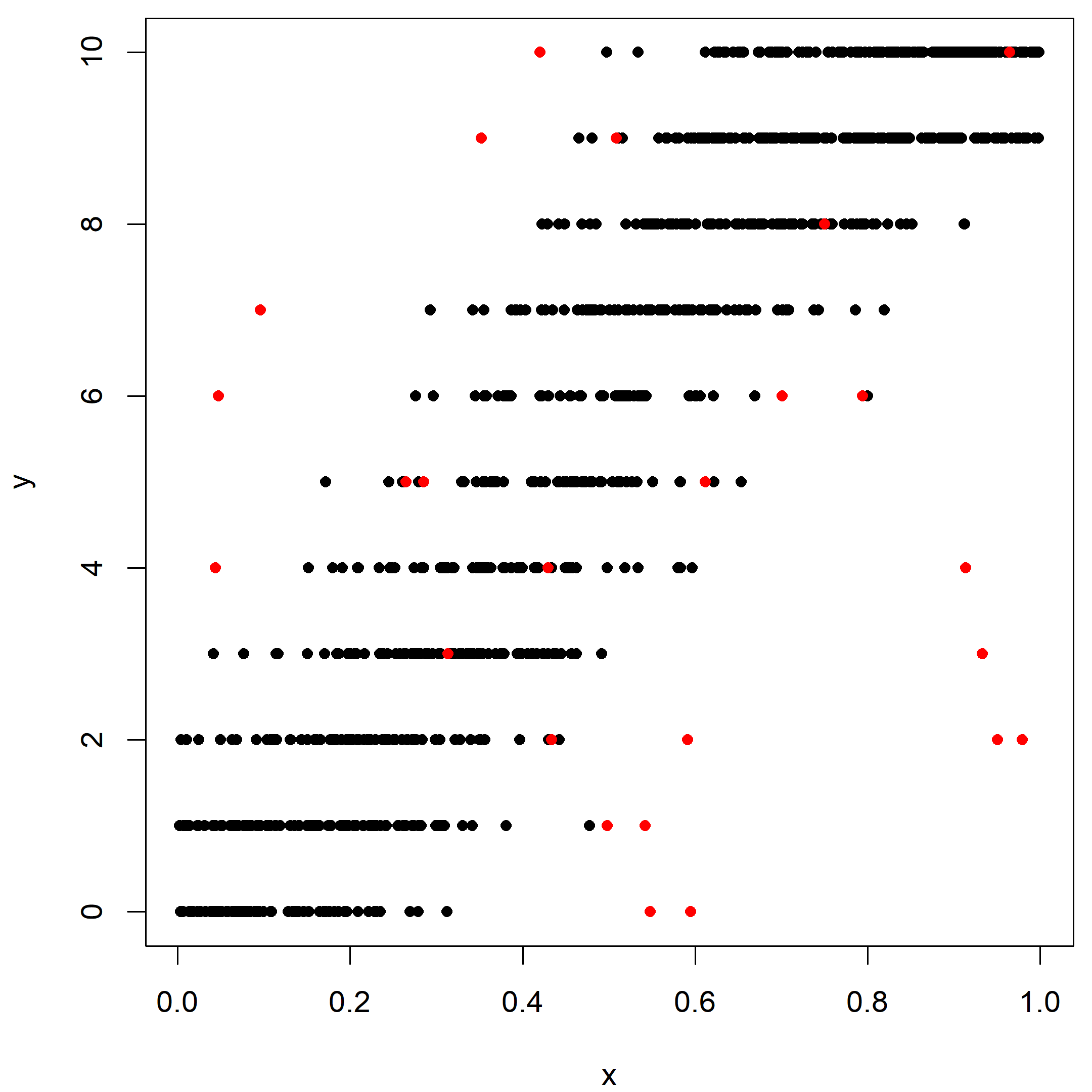}
		\caption{5\%. }
	\end{subfigure}
	\caption{Examples of simulated binomial data with a different percentage of artificially added outliers (in red).}
\label{plot sensitivity example}
\end{figure}

\begin{table}[h!]
\centering
\caption{Sensitivity analysis to examine how the artificially added anomalous values affect the estimates of the regression models.}
\label{table sensitivity analysis}
\begin{tabular}{clclrrrlrrr}
\toprule
                      &  & \multicolumn{1}{l}{} &  & \multicolumn{3}{c}{1\%}                                                        &                      & \multicolumn{3}{c}{5\%}                                                        \\ \cline{5-7} \cline{9-11} 
                      &  & \multicolumn{1}{l}{} &  & \multicolumn{1}{c}{B-RM} & \multicolumn{1}{c}{BB-RM} & \multicolumn{1}{c}{cBB-RM} &                      & \multicolumn{1}{c}{B-RM} & \multicolumn{1}{c}{BB-RM} & \multicolumn{1}{c}{cBB-RM} \\ \midrule
\multirow{2}{*}{Bias} &  & $\hat{\beta}_0$            &  & 0.0605                    & 0.0536                 & 0.0314                  & \multicolumn{1}{r}{} & 0.3127                    & 0.2809                 & 0.1625                  \\
                      &  & $\hat{\beta}_1$            &  & -0.1477                   & -0.1287                & -0.0753                 & \multicolumn{1}{r}{} & -0.7526                   & -0.6638                & -0.3840                 \\
                      &  &                      &  &                           &                        &                         & \multicolumn{1}{r}{} &                           &                        &                         \\
\multirow{2}{*}{MSE}  &  & $\hat{\beta}_0$            &  & 0.0122                    & 0.0112                 & 0.0088                  &                      & 0.1100                    & 0.0897                 & 0.0347                  \\
                      &  & $\hat{\beta}_1$            &  & 0.0584                    & 0.0513                 & 0.0382                  &                      & 0.6200                    & 0.4861                 & 0.1826                  \\ \bottomrule
\end{tabular}
\end{table}

The sensitivity analysis reveals that the presence of outliers significantly affects the estimates of the B-RM as reflected by the bias and MSE. Although the BB-RM performs better than the B-RM, it is also sensitive to outliers. 
This is even more pronounced as the percentage of outliers increases from 1\% to 5\%, as expected.
The cBB-RM consistently demonstrates the lowest bias and MSE for both regression coefficients, indicating that the cBB-RM is more reliable for handling data prone to contamination.

\section{Real data applications}
\label{Section application} 

In this section, we investigate the behaviour of the cBB-D and cBB-RM by applying them to real-world datasets, namely survival data of mule deer due to winter malnutrition (Section \ref{section mule deer}) and the Eurobarometer 95.1 survey (Section \ref{section eurobarometer}). 
To illustrate the model's viability as an alternative for overdispersed bounded counts, we benchmark it to a flexible BB generalization known as the beta-2-binomial (B2B) model \citep{bayes2024robust}.
Model performance is ranked via the  Akaike information criterion (AIC; \citealp{akaike1974new}),
\begin{align}
    \text{AIC}=2k-2l(\hat{\bbeta},\hat{\balpha}, \hat{\bgamma}, \hat{\blambda}),
\end{align}
the Bayesian information criterion (BIC; \citealp{schwarz1978estimating}),
\begin{align*}
    \text{BIC} = \mathrm{log}(n)k-2l(\hat{\bbeta},\hat{\balpha}, \hat{\bgamma}, \hat{\blambda}) ,
\end{align*}
and the Hannan-Quinn information criterion (HQIC; \citealp{hannan1979determination})
\begin{align*}
    \text{HQIC} = 2\mathrm{log}(\mathrm{log}(n))k-2l(\hat{\bbeta},\hat{\balpha}, \hat{\bgamma}, \hat{\blambda}), 
\end{align*}
where $k$ is the number of parameters, $n$ is the number of observations, 
$\hat{\bbeta},\hat{\balpha}, \hat{\bgamma}$, and  $\hat{\blambda}$ are the ML estimates of $\bbeta, \balpha,\bgamma$ and $\blambda$, respectively, and  where $l(\hat{\bbeta},\hat{\balpha}, \hat{\bgamma}, \hat{\blambda})$ is the maximized log-likelihood value for the cBB-RM.

Moreover, we use the likelihood-ratio (LR) test, which evaluates nested models, to assess whether the cBB-RM (alternative model) provides a significant improvement over the BB-RM (null model), as the latter is a special case of the former. 
Under the null hypothesis of no improvement, the test statistic is 
\begin{align} 
\label{eq likelihood ratio}
\text{LR}=-2\left[l(\hat{\bbeta}, \hat{\balpha}) -l(\hat{\bbeta},\hat{\balpha}, \hat{\bgamma}, \hat{\blambda})\right],
\end{align} 
 where $l(\hat{\bbeta}, \hat{\balpha})$ is the maximized log-likelihood value for the BB-RM and $l(\hat{\bbeta},\hat{\balpha}, \hat{\bgamma}, \hat{\blambda})$ is as previously defined. 
Based on Wilk's theorem, the LR statistic approximately follows a $\chi^2$ distribution with degrees of freedom equal to the number of parameters between the alternative and null models. 
This enables the calculation of a $p$-value to determine the significance of the improvement.

After using the EM algorithm outlined in Section \ref{section em algorithm} to estimate the model parameters, we obtain the variance-covariance matrix of the parameter estimates by inverting the negative Hessian matrix, which is computed using the \texttt{optim()} function in \texttt{R}. 
The standard errors of the cBB-RM parameter estimates are then determined by taking the square roots of the diagonal elements of this matrix.

\subsection{Mule Deer Mortality} 
\label{section mule deer}

The mortality rates of mule deer (\textit{Odocoileus hemionus}) fawns due to winter malnutrition were analyzed using data collected from radio-collared fawns captured during early winter in Colorado, Idaho, and Montana in the United States of America between 1981 and 1996 \citep{unsworth1999mule}. 
The study consisted of $26$ separate observations encompassing a total of $n=1875$ radio-collared mule deer collected over the years in the three states.
This dataset represents a comprehensive, long-term study of overwinter survival in a species that is highly sensitive to environmental conditions. 

As a first step in the analysis, we fit an intercept-only cBB-RM to \textsf{deaths} ($Y$), representing the number of mule deer that died to winter malnutrition out of a total of $m_i$ radio-collared fawns, for $i = 1, \dots, 26$.
This is equivalent to fitting the cBB-D, but with varying $m$. 
\tablename~\ref{table mule cBB-d AIC BIC} presents the results of comparing the cBB-D to the B-D, BB-D, and B2B-D.
The models are separately ranked via their AIC, BIC, and HQIC values.
The cBB-RM performed the best, but it is also worth noting that the binomial distribution is not enough for this dataset, regardless of the considered measure. 
The LR test further supports that the cBB-RM is an improvement over the BB-RM with a $p$-value of $0.007$.
As for the cBB-RM, it is worth noting that the estimate of the contamination proportion is $\widehat{\delta}=0.603$, showing a large excess of extreme values with respect to the reference cBB-RM for the regular counts.
\begin{table}[!ht]
	\centering
	\caption{Ranking of fitted models to the mule deer mortality data according to the AIC, BIC, and HQIC.}
	
	\begin{tabular}{l r r c c c c c c } 
		\toprule
		Model & \#par & log-likelihood & AIC & rank & BIC & rank & HQIC & rank\\
		
		\midrule

B-RM	&	1	& -234.424	& 470.849	& 4	& 472.107	&	4 & 471.211 &4\\
BB-RM	&	2	& -90.125	&	184.249& 3	&186.765	&2 &184.974 &3	\\
cBB-RM	&	4	&-86.470	&180.940	&1	&185.973	&1	&182.390 &1\\
B2B-RM	&	3	&-88.832	&183.663	&2	&187.438	&3	&184.750&2\\

		\bottomrule
	\end{tabular}
	\label{table mule cBB-d AIC BIC}
\end{table}


In the second part of the analysis, we account for the fact that fawn survival outcomes are influenced by multiple factors, including regional variations. Specifically, we now model $\pi$, which corresponds to the average number of deaths through the relationship in \eqref{ec:cBB mean}, as a (generalized) linear function of the nominal covariate \textsf{state} ($X$), indicating the location where the fawns were observed (Colorado, Idaho, or Montana).  
The cBB-RM is specified as:
\begin{align*}
    \textsf{deaths}_i|\textsf{state}_i &\sim c\mathcal{BB}_{m_i}(\pi(\textsf{state}_i;\bbeta),\sigma,\delta,\eta)\\
    \logit(\pi(\textsf{state}_i;\bbeta)) &= \beta_0 + \beta_{\textsf{Idaho}} I(\textsf{state}_i=\textsf{Idaho})+\beta_{\text{Montana}} I(\textsf{state}_i=\text{Montana})\\
    \log(\sigma)&=\alpha_0\\
    \logit(\delta)&=\gamma_0\\
    \log(\eta-1)&=\lambda_0
\end{align*}
where $I(A)$ represents an indicator (dummy) variable that takes the value 1 if condition $A$ is met and 0 otherwise, with \textsf{Colorado} serving as the reference category for \textsf{state}.
As shown in \tablename~\ref{table mule cBB-rm AIC BIC}, the cBB-RM outperforms the other regression models according to the AIC and BIC.

The likelihood ratio (LR) test confirms that the cBB-RM model provides a statistically significant improvement over the BB-RM model, with a $p$-value of 0.001.
Additionally, an LR test can be performed to compare the (null) cBB-D model against the (alternative) cBB-RM model, as the former represents a special case of the latter when $\beta_1 = \beta_2 = 0$. 
The resulting $p$-value of 0.007 indicates that incorporating state membership to model the mean count within the cBB framework leads to a significant improvement in model fit.
\begin{table}[!ht]
	\centering
	\caption{Ranking of fitted models to the mule deer mortality data with state as a covariate according to the AIC, BIC, and HQIC.}	
	\begin{tabular}{l r r c c c c c c } 
		\toprule
		Model & \#par & log-likelihood & AIC & rank & BIC & rank & HQIC & rank\\		
		\midrule
B-RM	&	3	& -224.930	&445.860 &	4&459.634 &4 & 456.947 &4	\\
BB-RM	&	4	& -88.103	& 184.205& 3 & 189.238&3 & 185.654 &3	\\
cBB-RM	&	6	& -81.546&	175.092& 1&182.640 &1 & 177.265 & 2	\\
B2B-RM	&	5	&-84.790 & 179.581	&2 &185.871 &2	& 176.668 &1\\
		\bottomrule
	\end{tabular}
	\label{table mule cBB-rm AIC BIC}
\end{table}

\tablename~\ref{table mule deer survavl est RM} shows the estimated regression coefficients for the BB and cBB regression models, along with their standard errors in round brackets.
Since a logit link function is used, the coefficients describe the change in log-odds of a mule deer dying from winter malnutrition. 
In the BB-RM, the intercept $\hat{\beta}_0 = -1.186$ (SE = 0.267) represents the log-odds of death in Colorado, corresponding to an odds of $e^{-1.186} \approx 0.31$ or probability of $\frac{0.31}{1+0.31} \approx 0.24$. 
The coefficient for Idaho is $\hat{\beta}_1 = -1.010$ (SE = 0.634), indicating that mule deer in Idaho have lower log-odds of death compared to Colorado, with an odds ratio of $e^{-1.010} \approx 0.36$, meaning about 64\% lower odds of death. 
However, the standard error is large, suggesting some uncertainty. 
The coefficient for Montana is $\hat{\beta}_2 = -0.774$ (SE = 0.560), implying an odds ratio of $e^{-0.774} \approx 0.46$, or about 54\% lower odds of death compared to Colorado, though with a notable standard error. 
In the cBB-RM, the intercept $\hat{\beta}_0 = -0.968$ (SE = 0.114) is slightly higher in absolute value, and the smaller standard error suggests greater precision. 
The Idaho coefficient $\hat{\beta}_1 = -1.603$ (SE = 0.555) indicates a stronger reduction in log-odds compared to the BB-RM, with an odds ratio of $e^{-1.603} \approx 0.20$, meaning Idaho mule deer are about 80\% less likely to die than those in Colorado. 
The Montana coefficient $\beta_2 = -0.281$ (SE = 0.333) is much smaller than in the BB model, corresponding to an odds ratio of $e^{-0.281} \approx 0.76$, meaning only a 24\% lower odds of death. 
The standard error for Montana is also smaller, suggesting a more stable estimate. Overall, the cBB-RM suggests a stronger effect for Idaho and a weaker effect for Montana compared to the BB model, while also providing more precise estimates, likely due to better handling of extreme values and data heterogeneity.
Specifically, since $\hat{\delta} = \logit^{-1}(\hat{\gamma}_0) = 0.567$, approximately 56.7\% of observations belong to the contaminant BB component, indicating an excess of extreme counts relative to the reference BB-RM. 
This percentage is lower than the 60.3\% obtained in the initial analysis without covariates, suggesting that incorporating the state variable to explain the mean count reduces some of the excess variability previously attributed to extreme observations. 
In other words, state membership accounts for part of the overdispersion observed when no covariates were included.  
Furthermore, the estimated degree of contamination parameter is given by $\hat{\eta} = e^{\hat{\lambda}_0} + 1 = e^{4.606} + 1 \approx 100.8$, indicating that counts from the contaminant BB component are about 100 times more dispersed than those from the reference BB component. 
This suggests that while state membership helps explain some of the extreme observations, a substantial degree of contamination remains, emphasizing the necessity of using a flexible model like cBB-RM to properly capture the data heterogeneity.
\begin{table}[!ht]
\centering
\caption{Estimated coefficients and corresponding SEs (in brackets) of BB and cBB regression models to mule deer mortality data.}
\label{table mule deer survavl est RM}
\begin{tabular}{lrrlrr}
\toprule
& \multicolumn{5}{c}{Estimates (SEs)}\\
Parameter & \multicolumn{2}{c}{BB}                      &                      & \multicolumn{2}{c}{cBB} \\ 
\midrule
$\beta_0$   &    -1.186           &   (0.267)             &                      & -0.968       & (0.114)     \\
$\beta_{\text{Idaho}}$   &    -1.010           &   (0.634)             &                      & -1.603       & (0.555)     \\
$\beta_{\text{Montana}}$   &    -0.774           &   (0.560)             &                      & -0.281       & (0.333)     \\

$\alpha_0 $ & -1.427               & (0.325)               &                      & -5.281     & (1.373)     \\
$\gamma_0 $ &  & &                      & 0.269      & (0.578)     \\
$\lambda_0$ &  &  &                      & 4.606       & (1.457)     \\ 
\bottomrule
\end{tabular}
\end{table}

\subsection{Eurobarometer 95.1 Survey} 
\label{section eurobarometer}

Since the early 1970s, the European Commission has regularly surveyed public opinion in the Member States of the European Union using the "Standard and Special Eurobarometer". 
The Eurobarometer 95.1 survey was conducted in March--April 2021. 
For this analysis, we focus on responses from $n=998$ survey participants in the Netherlands. 
The dataset is freely accessible at \citet{Eurobarometer}. 
Our response variable $Y_i$ is the response of the seriousness of climate change out of 10, which we shifted by 1 so that '0' is instead considered as 'not at all a serious problem' and '9', which is the $m$ of the model, as 'an extremely serious problem' \citep{sciandra2024discrete}.
Hereafter, we shall also refer to $Y$ as \textsf{severity}.
A bar plot displaying the absolute frequency distribution of the observed $Y$-values is shown in \figurename~\ref{fig:bar plot survey}.
Given the availability of demographic and socioeconomic data, we also analyse responses in relation to specific factors. 
In \text{M}$_1$, we begin by fitting the data without any covariates to investigate the suitability of the cBB distribution. 
This initial approach assesses the responses on the perceived severity of climate change, providing a quantitative measure of individual concerns. 
In \text{M}$_2$, we include an explanatory variable on $\pi$, denoted as ``politics", representing the political ideology of the respondents, measured by a scale where respondents rated themselves from 1 (``left") to 10 (``right"), while keeping the dispersion, proportion of extreme points and degree of contamination constant.
\text{M}$_3$ is extended by including ``age"  as an explanatory variable on the overdispersion, while \text{M}$_4$ includes the settlement size where the respondent lives (either \textsf{rural area}, \textsf{small town}, or \textsf{large town}) as an explanatory variable of the degree of contamination. 
In formulas, the models are specified below.
\begin{description}[font=\normalfont]
    \item[$\text{M}_1$:]
    \begin{align*}
    \textsf{severity}_i & \sim c\mathcal{BB}_9(\pi,\sigma,\delta,\eta)\\
    \logit(\pi) &= \beta_0\\
    \log(\sigma)&= \alpha_0\\
    \logit(\delta)&=\gamma_0\\
    \log(\eta-1)&=\lambda_0
\end{align*}
\item[$\text{M}_2$:]
    \begin{align*}
    \textsf{severity}_i|\textsf{politics}_i & \sim c\mathcal{BB}_9(\pi(\textsf{politics}_i;\bbeta),\sigma,\delta,\eta)\\
    \logit(\pi(\textsf{politics}_i;\bbeta)) &= \beta_0+\beta_1\text{politics}_i \\
    \log(\sigma)&= \alpha_0 \\
    \logit(\delta)&= \gamma_0 \\
    \log(\eta-1)&= \lambda_0
\end{align*}
\item[$\text{M}_3$:]
    \begin{align*}
    \textsf{severity}_i|\textsf{politics}_i,\text{age}_i & \sim c\mathcal{BB}_9(\pi(\textsf{politics}_i;\bbeta),\sigma(\textsf{age}_i;\balpha),\delta,\eta)\\
    \logit(\pi(\textsf{politics}_i;\bbeta)) &= \beta_0+\beta_1\textsf{politics}_i \\
    \log(\sigma(\textsf{age}_i;\balpha))&= \alpha_0+\alpha_1 \textsf{age}_i \\
    \logit(\delta)&= \gamma_0 \\
    \log(\eta-1)&= \lambda_0
\end{align*}
\item[$\text{M}_4$:]
    \begin{align*}    \textsf{severity}_i|\textsf{politics}_i,\text{age}_i,\textsf{settlement}_i & \sim c\mathcal{BB}_9(\pi(\textsf{politics}_i;\bbeta),\sigma(\textsf{age}_i;\balpha),\delta,\eta(\textsf{settlement}_i;\blambda))\\
    \logit(\pi(\textsf{politics}_i;\bbeta)) &= \beta_0+\beta_1\text{politics}_i \\
    \log(\sigma(\textsf{age}_i;\balpha))&= \alpha_0+\alpha_1 \text{age}_i \\
    \logit(\delta)& = \gamma_0 \\
    \log(\eta(\textsf{settlement}_i;\blambda)-1)& = \lambda_0+\lambda_1I(\textsf{settlement}_i=\textsf{rural})+\lambda_2I(\textsf{settlement}_i=\text{small town})
\end{align*}
\end{description}
for $i=1,\dots,998$, with \textsf{large town} serving as the reference category for \textsf{settlement} size.

\begin{figure}[!ht]
    \centering
    \includegraphics[scale=0.5]{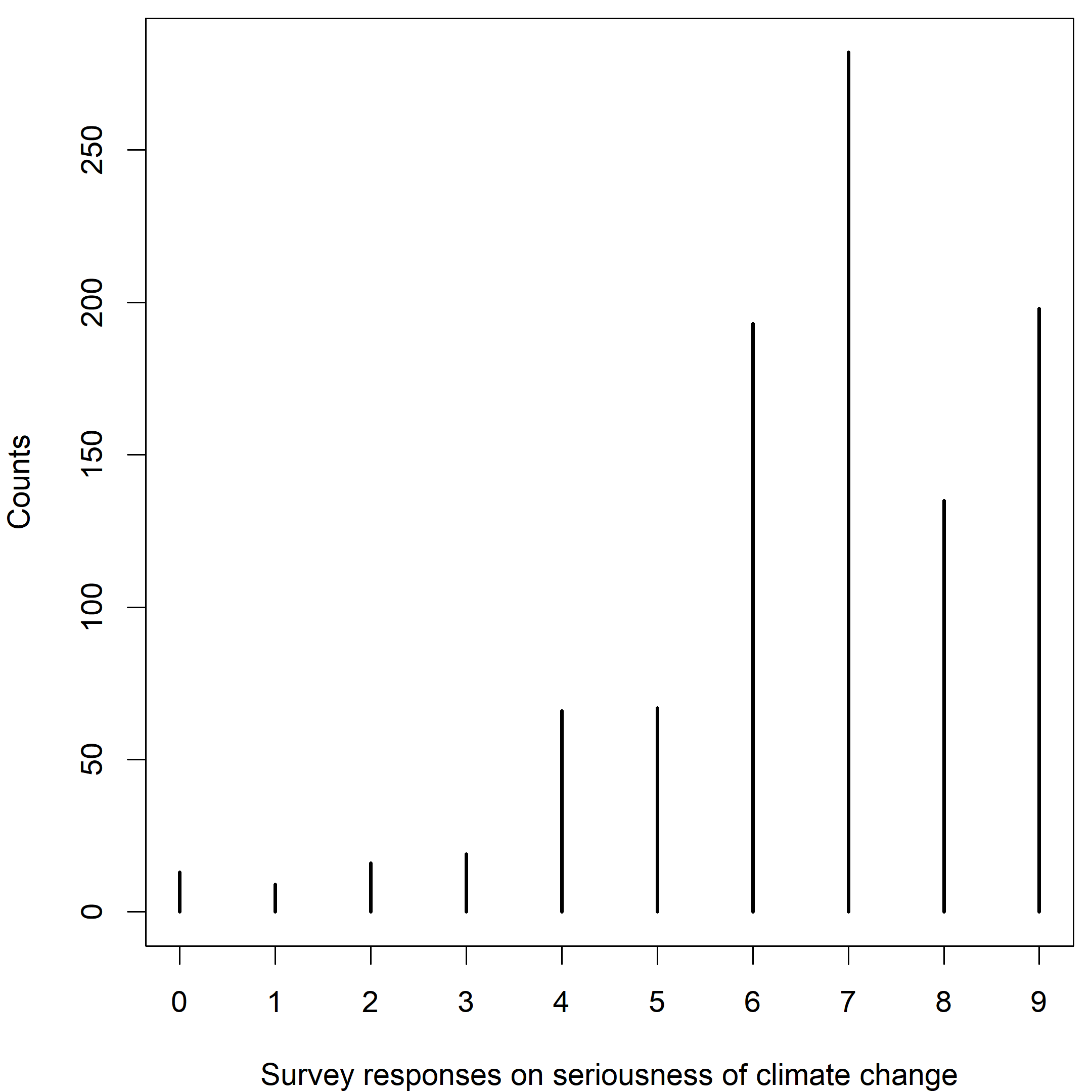}
    \caption{Barplot of the counts of the survey responses on seriousness of climate change (0 = Not at all serious, 9 = Extremely serious)".}
    \label{fig:bar plot survey}
\end{figure}

\begin{table}[!ht]
	\centering
	\caption{Ranking of fitted cBB-RMs to the Netherlands Survey Responses on Climate Change data according to the AIC, BIC and HQIC.}
	
	\begin{tabular}{l r r c c c c c c  } 
		\toprule
		Model & \#par & log-likelihood & AIC & rank & BIC & rank & HQIC& rank\\
		
		\midrule

\text{M}$_1$	&	4	& -1890.949		&	3789.899	& 4	&3809.522		& 4	& 3797.357 & 4 \\
\text{M}$_2$	& 5 & -1809.903 & 3629.807&3 & 3654.336 &	3& 3639.130 & 3\\
\text{M}$_3$	& 6 & -1805.580& 3623.159 &2 &3652.594	& 1& 3634.347&1 \\
\text{M}$_4$& 8 &-1803.433 & 3622.866&1 &3662.112 &	2&3637.783 &2 \\

        \bottomrule
	\end{tabular}
	\label{table NL survey cbb AIC BIC HQIC}
\end{table}

\begin{table}[!ht] 
    \caption{Pairwise likelihood ratio test $p$-values of fitted cBB-RM models to the Netherlands Survey responses on Climate Change data.}
	
    \begin{tabular}{|c|ccc c|}
    \toprule
        & M$_1$ & M$_2$ & M$_3$ &  M$_4$ \\
        \midrule
        M$_1$ &&&&\\
        M$_2$ & 0.000 &  & & \\
        M$_3$ & 0.000 & 0.003 & & \\
        M$_4$ & 0.000 & 0.004 & 0.117 &\\
        \bottomrule
    \end{tabular}
    \label{table NL LR p-values}
    \centering
\end{table}

As shown in \tablename~\ref{table NL survey cbb AIC BIC HQIC}, M$_4$ performed the best based on the AIC, whereas M$_3$ did according to the BIC  and HQIC.  
This is supported by the pairwise LR test $p$-values in \tablename~\ref{table NL LR p-values}, which shows that M$_4$ is not a statistically significant improvement over M$_3$.
Nevertheless, the parameter estimates and SEs for both M$_3$ and M$_4$ are given in \tablename~\ref{table NL survey est M3 M4}. 


\begin{table}[!ht]
\centering
\caption{Estimated coefficients and corresponding SEs (in brackets) of BB-RM and cBB-RM to the Netherlands Survey data, using covariates as specified in M$_3$ and M$_4$.}
\label{table NL survey est M3 M4}
\begin{tabular}{lrrlrr}
\toprule
Parameter & \multicolumn{2}{c}{M$_3$}                      &                      & \multicolumn{2}{c}{M$_4$} \\ 
\midrule
$\beta_0$   &   2.146              & (0.095)                &                      &   2.131     & (0.094)     \\
$\beta_1 $  &   -0.213             &    (0.016)             &  & -0.210      & (0.016)     \\
$\alpha_0 $ & -6.997              & (1.376)               &                      & -7.740      & (1.483)     \\
$\alpha_1 $ & 0.045              & (0.017)               &                      & 0.057      & (0.021) \\

$\gamma_0 $ & -1.450 & (0.317)&                      & -1.499      & (0.274)     \\
$\lambda_0$ & 4.840 & (0.820) &                      &   6.131    & (1.130)\\

$\lambda_1$ & & &                      &   -1.435    & (0.934)\\

$\lambda_2$ & & &                      &   -1.343    & (0.850)\\
\bottomrule
\end{tabular}
\end{table}

Due to the different parametrizations of the BB-RM, cBB-RM, and B2B model, direct comparisons for M$_3$ and M$_4$ are not possible, as certain parameters (e.g., the contamination degree $\eta$) are absent in the BB-RM and B2B model. However, comparisons can be made for M$_1$ and M$_2$, as seen in Tables \ref{table NL survey AIC BIC} and \ref{table NL survey mean AIC BIC}, respectively. 

\begin{table}[h!]
	\centering
	\caption{Ranking of fitted models to the Netherlands Survey responses on Climate Change data according to the AIC, BIC and HQIC, using no covariates as specified in M$_1$.}
	
	\begin{tabular}{l r r c c c c c c } 
		\toprule
		Model & \#par & log-likelihood & AIC & rank & BIC & rank & HQIC &rank\\
		
		\midrule

B-D	&	1	&	-2133.422	&	4268.843	&4	&	4273.749	&4&4270.708&	4\\
BB-D	&	2	&	-1937.026	&	3878.051	&	3&	3887.863	&	3&3881.781&3\\
cBB-D	&	4	&	-1890.949	&	3789.899	&1	&	3809.522	&	2& 3797.357&1\\
B2B-D	&	3	&	-1893.708	&	3793.416	&	2&	3808.133	&	1&3799.010&2\\

		\bottomrule
	\end{tabular}
	\label{table NL survey AIC BIC}
\end{table}

\begin{table}[h!]
	\centering
	\caption{Ranking of fitted models to the Netherlands Survey responses on Climate Change data according to the AIC, BIC and HQIC, using covariates as specified in M$_2$.}
	
	\begin{tabular}{l r r c c c c c c  } 
		\toprule
		Model & \#par & log-likelihood & AIC & rank & BIC & rank & HQIC & rank\\
        \midrule
B-RM	&	2	&	-1951.782	&	3907.564	&	4&	3917.375	&4	&3911.293&4\\
BB-RM	&	3	&	-1843.396	&	3692.792	&	3&	3707.509	&	3&3698.386&3\\
cBB-RM	&	5	&	-1809.903	&	3629.807	&	2&	3654.336	& 2&3639.130&	2\\
B2B-RM	&	4	&	-1809.377	&	3626.755	&	1&	3646.378	& 1&3630.349& 1\\
		\bottomrule
	\end{tabular}
	\label{table NL survey mean AIC BIC}
\end{table}

\begin{table}[h!]
\centering
\caption{Estimated coefficients and corresponding SEs (in brackets) of BB and cBB regression models to the Netherlands Survey data, using covariates as specified in M$_2$.}
\label{table NL survey mean est}
\begin{tabular}{lrrlrr}
\toprule
Parameter & \multicolumn{2}{c}{BB}                      &                      & \multicolumn{2}{c}{cBB} \\ \midrule
$\beta_0$   & 2.281                & (0.096)                &                      & 2.132       & (0.101)     \\
$\beta_1 $  & -0.235               & (0.017)                &  & -0.210      & (0.018)     \\
$\alpha_0 $ & -2.239               & (0.105)               &                      & -5.406      & (4.560)     \\
$\gamma_0 $ &  & &                      & -1.271      & (0.512)     \\
$\lambda_0$ &  &  &                      & 5.430       & (4.168)     \\ \bottomrule
\end{tabular}
\end{table}

From \tablename~\ref{table NL survey AIC BIC}, we observe that the cBB-RM performs best according to the AIC and HQIC, whereas the B2B model performs better according to the BIC. However, this is not the case for M$_4$, where the B2B model outperformed the cBB-RM according to all the criteria. Despite this, the cBB-RM retains the advantage of having interpretable parameters. 

\tablename~\ref{table NL survey mean est} presents the estimated regression coefficients under $\text{M}_2$ for the BB-RM and cBB-RM, along with their standard errors in parentheses. Since a logit link function is used for the parameter $\pi$, the coefficients describe the change in log-odds of the outcome variable as a function of political ideology.
The estimated coefficient for political ideology is $\hat{\beta}_1 = -0.235$ (SE = 0.017), indicating that as respondents move one unit to the right on the political scale, their log-odds of the outcome decrease. 
This corresponds to an odds ratio of $e^{-0.235} \approx 0.79$, meaning that for each one-point increase in political ideology (toward a more right-leaning stance), the odds of the outcome decrease by approximately 21\%. 
The relatively small SE suggests a precise estimate.
In the cBB-RM, the intercept $\hat{\beta}_0 = 2.132$ (SE = 0.101) is slightly lower, suggesting a similar but slightly reduced baseline log-odds. The political ideology coefficient $\hat{\beta}_1 = -0.210$ (SE = 0.018) is also slightly smaller in magnitude compared to the BB-RM, with an odds ratio of $e^{-0.210} \approx 0.81$, meaning a slightly weaker association between political ideology and the outcome. The SE remains small.
The dispersion parameter $\alpha_0$ in the BB-RM is estimated at $-2.239$ (SE = 0.105), whereas in the cBB model, it is $-5.406$ (SE = 4.560), indicating a large shift with greater uncertainty in the contaminated model. 
The estimated proportion of observations belonging to the contaminant BB component is given by $\hat{\delta} = \logit^{-1}(\hat{\gamma}_0) = \logit^{-1}(-1.271) \approx 0.22$, meaning about the 22\% of excess of extreme responses.
The degree of contamination is estimated as $\hat{\eta} = e^{\hat{\lambda}_0} + 1 = e^{5.430} + 1 \approx 229.15$, indicating that the contaminant BB component exhibits a dispersion more than 200 times greater than that of the reference BB component. 
This suggests that while political ideology explains some of the variability in responses, a significant amount of overdispersion remains, reinforcing the importance of using a flexible model like cBB-RM to accommodate extreme values.

As noted in \tablename~\ref{table NL survey est M3 M4}, under $\text{M}_3$, the estimated mean parameters remain close to those under $\text{M}_2$, indicating a consistent relationship between political ideology and the outcome. However, the SE of the mean coefficients decreases compared to those under $M_2$. The dispersion parameter now varies with age: $\hat{\alpha}_0 = -6.997$ (SE = 1.376) and $\hat{\alpha}_1 = 0.045$ (SE = 0.017). Since dispersion is modelled on the log scale, this suggests that for each additional year of age, dispersion increases by a multiplicative factor of $e^{0.045} \approx 1.05$, meaning older respondents exhibit slightly higher overdispersion. The SE for age is small, indicating a stable estimate.
In $M_4$, settlement size is introduced as a predictor of the degree of contamination while keeping the dispersion-age relationship from $M_3$. The estimated mean parameters remain similar, and the dispersion parameters are also close to those in $M_3$, with $\hat{\alpha}_0 = -7.740$ (SE = 1.483) and $\hat{\alpha}_1 = 0.057$ (SE = 0.021), indicating a slightly stronger effect of age on dispersion.
However, settlement size has a notable effect on the degree of contamination. The baseline degree of contamination is estimated as $\hat{\lambda}_0 = 6.131$ (SE = 1.130), corresponding to $\hat{\eta} = e^{6.131} + 1 \approx 460.2$, indicating substantial contamination in the reference category (large towns). For rural respondents, $\hat{\lambda}_1 = -1.435$ (SE = 0.934), leading to a degree of contamination of $\hat{\eta}_{\text{rural}} = e^{6.131 - 1.435} + 1 \approx 139.7$, while for small-town respondents, $\hat{\lambda}_2 = -1.343$ (SE = 0.850), giving $\hat{\eta}_{\text{small town}} = e^{6.131 - 1.343} + 1 \approx 158.5$. This suggests that contamination is substantially lower in rural and small-town areas compared to large towns.

Overall, these findings highlight the benefits of the cBB framework in handling overdispersion and an excess of extreme observations. 
The inclusion of covariates for dispersion and contamination enhances model flexibility and provides deeper insights into the factors influencing the response variability.

\section{Conclusion}
\label{Section conclusion}

In this paper, we introduce the contaminated beta-binomial distribution (cBB-D) as a new approach for modeling bounded count data. 
Our model is formulated as a simple mixture of two beta-binomial distributions (BB-Ds) that share the same mean but differ in dispersion; advantageously, this implies a closed-form expression of the probability mass function. 
The primary motivation behind this formulation is to protect the reference BB-D—characterized by lower dispersion—from model misspecification, particularly due to an excess of extreme observations. 
These "additional" extreme observations are assumed to arise from the contaminant BB-D, which exhibits higher dispersion in our mixture framework.
We place particular emphasis on the flexibility in capturing key characteristics of interest, including mean, variance, skewness, and kurtosis. 
Another notable advantage of the cBB-D is that its formulation introduces only two additional (contamination) parameters with respect to the reference BB-D, both of which have intuitive and practical interpretations. 
These parameters represent the proportion of observations from the contaminant BB-D and the degree of contamination, where the latter quantifies the extent to which the contaminant BB-D is more dispersed than the reference BB-D.

Another strength of the cBB-D is its parameterization, which includes a mean-related parameter and a dispersion parameter in addition to the contamination parameters. 
This structure allows the cBB-D to be seamlessly integrated into a regression framework, giving rise to the cBB regression model (cBB-RM). 
Notably, this regression framework is not restricted to modeling the mean only; rather, it extends to all parameters of the cBB-D, enabling the inclusion of different covariates for each parameter. 
This results in a highly flexible regression model for bounded count response variables, particularly in the presence of an excess of extreme values.

From an inferential standpoint, we propose an expectation-maximization (EM) algorithm for the maximum likelihood estimation of cBB-RM parameters. 
The robustness of the model to outliers is investigated via a sensitivity analysis, focusing on their potential to introduce bias in the estimated regression parameters. 
The real-world applications in the context of climate change further validate the effectiveness of our approach, demonstrating its advantages over established regression models for bounded counts. 
These results position the cBB-RM as a strong alternative for analyzing bounded count data characterized by an excess of extreme observations.


\section*{Funding acknowledgements}
Ferreira and Bekker have been partially supported by the National Research Foundation (NRF) of South Africa (SA), grant RA201125576565, nr 145681; RA171022270376, grant Nr. 119109; and grant SRUG2204203865 nr. 120839. 
The opinions expressed and conclusions arrived at are those of the authors and are not necessarily to be attributed to the NRF. Tortora has been partially supported by NSF grant Nr. 2209974.
\section*{Data availability statement}
All datasets considered in this paper are freely available on the internet.

\section*{Disclosure statement}
The authors declared no potential conflicts of interest with respect to the research, authorship, and/or publication of this article.
\bibliographystyle{chicago}
\bibliography{database.bib}

\newpage

\appendix 

\section{Proofs} 
\label{ProofinApp}

\renewcommand{\thefigure}{A.\arabic{figure}}
\setcounter{figure}{0}

\noindent\emph{Proof Proposition 1:}

The cBB distribution in \eqref{pdf contaminated beta binomial} has the hierarchical representation
\begin{eqnarray}
    W   &\sim& \mathcal{TP}_{\{1,\eta\}}(\delta)\nonumber\\
    Y|W=w&\sim&\mathcal{BB}_m(\pi,w\sigma), \label{eq:hirc2}
\end{eqnarray}
where $\mathcal{TP}_{\{1,\eta\}}(\delta)$ denotes a two-point random variable with probability of success $\delta$ on the support $\left\{1,\eta\right\}$ defined as
\begin{equation}\label{bernoulli dist}
W=\begin{cases}
    1 &\text{with probability $1-\delta$,} \\
    \eta &\text{with probability $\delta$.} 
\end{cases}
\end{equation}
The proofs of \ref{item:prop1}--\ref{item:prop4} in Proposition \ref{proposition 1} follow:
\begin{enumerate}[label=(\itshape\alph*\upshape)]
\item 
\label{P1Proof} 
if $\delta\to0^{+}$, from \eqref{bernoulli dist} it follows that $W \overset{D}{\to} 1$ and, therefore, according to \eqref{eq:hirc2}--\eqref{bernoulli dist}, $Y\overset{D}{\to}\mathcal{BB}_m(\pi,\sigma)$;  	
      \item 
      \label{P2Proof} 
      if $\eta\to1^{+}$, from \eqref{bernoulli dist} it follows that $W \overset{D}{\to} 1$ and, as before, according to \eqref{eq:hirc2}--\eqref{bernoulli dist}, $Y\overset{D}{\to}\mathcal{BB}_m(\pi,\sigma)$; 
      \item if $\delta\to0^{+}$ and $\sigma\to0^{+}$, from the proof for \ref{P1Proof} and from the results given in \citet{griffiths1973maximum}, it follows that $Y\overset{D}{\to}\mathcal{B}_m(\pi)$; \item if $\eta\to1^{+}$ and $\sigma\to0^{+}$, from the proof for \ref{P2Proof} and, again, as demonstrated in \citet{griffiths1973maximum}, it follows that $Y\overset{D}{\to}\mathcal{B}_m(\pi)$.
\end{enumerate}

\begin{prop}
     Characteristics of the cBB distribution.
If $Y\sim c\mathcal{BB}_m(\pi,\sigma,\delta,\eta)$, then \begin{enumerate}[label=(\itshape\alph*\upshape)]
    \item the variance is given by \begin{align*}
        \text{Var}_{\text{cBB}_m}(Y;\pi,\sigma,\delta,\eta)
        &=\frac{m\pi(1-\pi)\left[(1-\delta)(1+m\sigma)(1+\eta\sigma)+\delta(1+m\eta\sigma)(1+\sigma)\right]}{(1+\sigma)(1+\eta\sigma)};
    \end{align*}
    \item the skewness is given by 
    \begin{align*}
    \text{Skew}_{\text{cBB}_m}(Y;\pi,\sigma,\delta,\eta)&=\frac{1-\delta}{\left(\text{Var}_{\text{cBB}_m}(Y;\pi,\sigma,\delta,\eta)\right)^\frac{3}{2}}\left(\frac{m\pi(1-\pi)(1-2\pi)(1+m\sigma )(1+2m\sigma )}{(1+\sigma )(1+2\sigma )}\right)\\
          &+
          \frac{\delta}{\left(\text{Var}_{\text{cBB}_m}(Y;\pi,\sigma,\delta,\eta)\right)^{3}{2}}\left(\frac{m\pi(1-\pi)(1-2\pi)(1+m\sigma \eta)(1+2m\sigma \eta)}{(1+\sigma \eta)(1+2\sigma \eta)}\right);
          \end{align*}
     \item the kurtosis is given by  \begin{align*}
    &\text{ExKurt}_{\text{cBB}_m}(Y;\pi,\sigma,\delta,\eta)\\
    &=-3+\frac{m \pi   (1-\pi ) }{\left[{\text{Var}_{\text{cBB}}(Y;\pi,\sigma,\delta,\eta)}\right]^2}\\
    &\times\left(\frac{(1-\delta)(m \sigma +1) \left(6 (3 (\pi -1) \pi +1) m^2 \sigma ^2+3 m \sigma  (2-(\pi -1) \pi  (m-6))-3 (\pi -1) \pi  (m-2)-\sigma +1\right)}{(\sigma +1) (2 \sigma +1) (3 \sigma +1)}\right.\\
    &+\left.\frac{\delta(\eta  m \sigma +1) \left(-\eta  \sigma +6 \eta ^2 (3 (\pi -1) \pi +1) m^2 \sigma ^2+3 \eta  m \sigma  (2-(\pi -1) \pi  (m-6))-3 (\pi -1) \pi  (m-2)+1\right)}{(\eta  \sigma +1) (2 \eta  \sigma +1) (3 \eta  \sigma +1)}\right)\\
\end{align*}
\end{enumerate}
\end{prop}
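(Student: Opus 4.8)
The plan is to derive all three characteristics from the mixture structure of the cBB-D together with the known moments of the BB-D. The starting point is the hierarchical representation $W\sim\mathcal{TP}_{\{1,\eta\}}(\delta)$ and $Y|W=w\sim\mathcal{BB}_m(\pi,w\sigma)$ established in the proof of Proposition~\ref{proposition 1}. Equivalently, writing the PMF directly as in \eqref{pdf contaminated beta binomial}, the $k$-th raw moment of $Y$ is the convex combination $\mathrm{E}_{\text{cBB}_m}(Y^k)=(1-\delta)\,\mathrm{E}_{\text{BB}_m}(Y^k;\pi,\sigma)+\delta\,\mathrm{E}_{\text{BB}_m}(Y^k;\pi,\eta\sigma)$. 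Since both component distributions share the same mean $m\pi$ (by \eqref{ec:BB mean}), the first step is to note $\mathrm{E}_{\text{cBB}_m}(Y)=m\pi$, which also means central moments mix in the same way as raw moments: for every $k\ge2$,
\begin{align*}
\mathrm{E}_{\text{cBB}_m}\!\left[(Y-m\pi)^k\right]=(1-\delta)\,\mu_k(\pi,\sigma)+\delta\,\mu_k(\pi,\eta\sigma),
\end{align*}
where $\mu_k(\pi,\cdot)$ denotes the $k$-th central moment of the corresponding BB-D. This is the key structural simplification and it is what makes the whole computation tractable.

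Next I would assemble the required BB central moments $\mu_2,\mu_3,\mu_4$ as explicit functions of $(\pi,\sigma,m)$. The variance $\mu_2$ is \eqref{eq variance betabinomial}. For $\mu_3$ and $\mu_4$ one can back them out from the skewness and excess kurtosis formulas \eqref{eq BB skew} and \eqref{eq BB kurt} via $\mu_3=\text{Skew}_{\text{BB}_m}\cdot\mu_2^{3/2}$ and $\mu_4=(\text{ExKurt}_{\text{BB}_m}+3)\,\mu_2^{2}$, or (cleaner) derive them directly from the beta-binomial factorial moments $\mathrm{E}[Y^{(j)}]=m^{(j)}\,\mathrm{B}(\pi/\sigma+j,(1-\pi)/\sigma)/\mathrm{B}(\pi/\sigma,(1-\pi)/\sigma)$ by converting factorial to central moments. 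Having $\mu_2,\mu_3,\mu_4$ in hand, the variance claim is immediate: $\text{Var}_{\text{cBB}_m}=(1-\delta)\mu_2(\pi,\sigma)+\delta\mu_2(\pi,\eta\sigma)$, and combining the two fractions over the common denominator $(1+\sigma)(1+\eta\sigma)$ gives exactly the stated expression.

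For the skewness, I would write $\text{Skew}_{\text{cBB}_m}=\big[(1-\delta)\mu_3(\pi,\sigma)+\delta\mu_3(\pi,\eta\sigma)\big]/\big(\text{Var}_{\text{cBB}_m}\big)^{3/2}$ and substitute $\mu_3(\pi,\sigma)=\dfrac{m\pi(1-\pi)(1-2\pi)(1+m\sigma)(1+2m\sigma)}{(1+\sigma)(1+2\sigma)}$ (and the analogue with $\sigma\mapsto\eta\sigma$), which reproduces \eqref{eq cBB skew}. For the excess kurtosis, I would use $\text{ExKurt}_{\text{cBB}_m}=\big[(1-\delta)\mu_4(\pi,\sigma)+\delta\mu_4(\pi,\eta\sigma)\big]/\big(\text{Var}_{\text{cBB}_m}\big)^{2}-3$ and plug in the BB fourth central moment in the factored form $\mu_4(\pi,\sigma)=\dfrac{m\pi(1-\pi)(m\sigma+1)\big(6(3(\pi-1)\pi+1)m^2\sigma^2+3m\sigma(2-(\pi-1)\pi(m-6))-3(\pi-1)\pi(m-2)-\sigma+1\big)}{(\sigma+1)(2\sigma+1)(3\sigma+1)}$, with the $\sigma\mapsto\eta\sigma$ version for the contaminant term; this matches \eqref{eq cBB kurt} after writing the two pieces over their respective denominators.

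The genuinely laborious part—and the main obstacle—is obtaining the closed-form BB central moments $\mu_3$ and especially $\mu_4$ in the compact factored shape that appears in the statement: the conversion from ascending-factorial (Pochhammer-ratio) moments to central moments produces a large rational expression in $\pi,\sigma,m$, and coaxing it into the displayed factored numerator over $(\sigma+1)(2\sigma+1)(3\sigma+1)$ requires careful algebra (a computer algebra system is the natural tool here). Once those BB building blocks are verified, everything else is just the linear mixing identity for central moments plus combining fractions, so no delicate limiting or measure-theoretic argument is needed; the mixing identity itself is trivial precisely because the two BB components share the common mean $m\pi$.
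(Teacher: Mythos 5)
Your proposal is correct and follows essentially the same route as the paper: exploit the two-component mixture (equivalently the hierarchical representation with the two-point mixing variable), observe that both BB components share the common mean $m\pi$ so that central moments mix linearly with weights $1-\delta$ and $\delta$, and then substitute the closed-form BB second, third, and fourth central moments. The only cosmetic difference is that the paper phrases the variance step via the conditional variance decomposition (whose between-component term vanishes because the conditional mean is constant), which is the same computation you perform by direct mixing.
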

\begin{proof}
If $Y\sim c\mathcal{BB}_m(\pi,\sigma,\delta,\eta)$, then from the hierarchical representation in \eqref{eq:hirc2}--\eqref{bernoulli dist}, 
\begin{enumerate}[label=(\itshape\alph*\upshape)]
    \item \begin{align*}
        \text{Var}_{\text{cBB}_m}(Y;\pi,\sigma,\delta,\eta)&=\mathrm{E}_{\text{cBB}_m}\left(Y^2;\pi,\sigma,\delta,\eta\right)-\left[\mathrm{E}_{\text{cBB}_m}(Y;\pi,\sigma,\delta,\eta)\right]^2\\
 &=\mathrm{E}_{\text{TP}}\left[\text{Var}_{\text{BB}_m}(Y|W=w;\pi,\sigma);\delta\right]+\text{Var}_{\text{TP}}\left[\mathrm{E}_{\text{BB}_m}(Y|W=w;\pi,\sigma);\delta\right]\\
    \end{align*}
    where 
    \begin{align*}
        \text{Var}_{\text{BB}_m}(Y|W=w;\pi,\sigma)=\frac{m\pi(1-\pi)(1+mw\sigma)}{1+w\sigma}.
    \end{align*}
    Since $\mathrm{E}_{\text{BB}_m}(Y|W=w;\pi,\sigma)=m\pi$, it follows that
    \begin{align*}
        \text{Var}_{\text{cBB}_m}(Y;\pi,\sigma,\delta,\eta)&=\mathrm{E}_{\text{TP}}\left[\frac{m\pi(1-\pi)(1+mw\sigma)}{1+w\sigma};\delta\right]\\
        &=(1-\delta)\frac{m\pi(1-\pi)(1+m\sigma)}{1+\sigma}+\delta\frac{m\pi(1-\pi)(1+m\eta\sigma)}{1+\eta\sigma}\\
        &=\frac{m\pi(1-\pi)\left[(1-\delta)(1+m\sigma)(1+\eta\sigma)+\delta(1+m\eta\sigma)(1+\sigma)\right]}{(1+\sigma)(1+\eta\sigma)}.
    \end{align*}
    \item \begin{align*}
        \text{Skew}_{\text{cBB}_m}(Y;\pi,\sigma,\delta,\eta)&=\mathrm{E}_{\text{cBB}_m}
        \left[\left(\frac{Y-\mathrm{E}_{\text{cBB}_m}(Y;\pi,\sigma,\delta,\eta)}{\sqrt{\text{Var}_{\text{cBB}_m}(Y;\pi,\sigma,\delta,\eta)}}\right)^3;\pi,\sigma,\delta,\eta\right]\\
        &=\sum_{i=0}^m\left(\frac{y_i-\mathrm{E}_{\text{cBB}_m}(Y;\pi,\sigma,\delta,\eta)}{\sqrt{\text{Var}_{\text{cBB}_m}(Y;\pi,\sigma,\delta,\eta)}}\right)^3f_{\text{cBB}_m}(y_i;\pi,\sigma,\delta,\eta)\\
         &=\sum_w\sum_{i=0}^m\left(\frac{y_i-m\pi}{\sqrt{\text{Var}_{\text{cBB}_m}(Y;\pi,\sigma,\delta,\eta)}}\right)^3f_{\text{BB}_m}(y_i;\pi,\sigma w)\\
         &=\frac{1}{\left(\text{Var}_{\text{cBB}_m}(Y;\pi,\sigma,\delta,\eta)\right)^{\frac{3}{2}}}\sum_w\sum_{i=0}^m\left({y_i-m\pi}\right)^3f_{\text{BB}_m}(y_i;\pi,\sigma w)\\
          &=\frac{1}{\left(\text{Var}_{\text{cBB}_m}(Y;\pi,\sigma,\delta,\eta)\right)^\frac{3}{2}}\sum_w\left(\frac{m\pi(1-\pi)(2\pi-1)(1+m\sigma w)(1+2m\sigma w)}{(1+\sigma w)(1+2\sigma w)}\right)\\
          &=\frac{1-\delta}{\left(\text{Var}_{\text{cBB}_m}(Y;\pi,\sigma,\delta,\eta)\right)^\frac{3}{2}}\left(\frac{m\pi(1-\pi)(1-2\pi)(1+m\sigma )(1+2m\sigma )}{(1+\sigma )(1+2\sigma )}\right)\\
          &+
          \frac{\delta}{\left(\text{Var}_{\text{cBB}_m}(Y;\pi,\sigma,\delta,\eta)\right)^{3}{2}}\left(\frac{m\pi(1-\pi)(1-2\pi)(1+m\sigma \eta)(1+2m\sigma \eta)}{(1+\sigma \eta)(1+2\sigma \eta)}\right).
    \end{align*}
    
\item    \begin{align*}
        \text{Kurt}_{\text{cBB}_m}(Y;\pi,\sigma,\delta,\eta)&=\mathrm{E}_{\text{cBB}_m}
        \left[\left(\frac{Y-\mathrm{E}_{\text{cBB}_m}(Y;\pi,\sigma,\delta,\eta)}{\sqrt{\text{Var}_{\text{cBB}_m}(Y;\pi,\sigma,\delta,\eta)}}\right)^4;\pi,\sigma,\delta,\eta\right]\\
        &=\sum_{i=0}^m\left(\frac{y_i-\mathrm{E}_{\text{cBB}_m}(Y;\pi,\sigma,\delta,\eta)}{\sqrt{\text{Var}_{\text{cBB}_m}(Y;\pi,\sigma,\delta,\eta)}}\right)^4f_{\text{cBB}_m}(y_i;\pi,\sigma,\delta,\eta)\\
         &=\sum_w\sum_{i=0}^m\left(\frac{y_i-m\pi}{\sqrt{\text{Var}_{\text{cBB}_m}(Y;\pi,\sigma,\delta,\eta)}}\right)^4f_{\text{BB}_m}(y_i;\pi,\sigma w)\\
         &=\sum_w\frac{1}{\left[{\text{Var}_{\text{cBB}_m}(Y;\pi,\sigma,\delta,\eta)}\right]^2}\sum_{i=0}^m\left({y_i-m\pi}\right)^4f_{\text{BB}_m}(y_i;\pi,\sigma w)\\
    \end{align*}
    where
    \begin{align*}
        &\sum_{i=0}^m\left({y_i-m\pi}\right)^4f_{\text{BB}_m}(y_i;\pi,\sigma w)\\
        &=\frac{m \pi  (1-\pi)  (m \sigma  w+1) \left(6 (3 (\pi -1) \pi +1) m^2 \sigma ^2 w^2-3 (\pi -1) \pi  (m-2)+3 m \sigma  w (2-(\pi -1) \pi  (m-6))-\sigma  w+1\right)}{(\sigma  w+1) (2 \sigma  w+1) (3 \sigma  w+1)}.
    \end{align*}
    Thus,
\begin{align*}
    &\text{ExKurt}_{\text{cBB}_m}(Y;\pi,\sigma,\delta,\eta)\\
    &=-3+\frac{m \pi   (1-\pi ) }{\left[{\text{Var}_{\text{cBB}_m}(Y;\pi,\sigma,\delta,\eta)}\right]^2}\\
    &\times\left(\frac{(1-\delta)(m \sigma +1) \left(6 (3 (\pi -1) \pi +1) m^2 \sigma ^2+3 m \sigma  (2-(\pi -1) \pi  (m-6))-3 (\pi -1) \pi  (m-2)-\sigma +1\right)}{(\sigma +1) (2 \sigma +1) (3 \sigma +1)}\right.\\
    &+\left.\frac{\delta(\eta  m \sigma +1) \left(-\eta  \sigma +6 \eta ^2 (3 (\pi -1) \pi +1) m^2 \sigma ^2+3 \eta  m \sigma  (2-(\pi -1) \pi  (m-6))-3 (\pi -1) \pi  (m-2)+1\right)}{(\eta  \sigma +1) (2 \eta  \sigma +1) (3 \eta  \sigma +1)}\right).\\
\end{align*}
\end{enumerate}
\end{proof}

\end{document}